\renewcommand{\Pr}{\mathbb{P}}
\newcommand{\E}{\mathbb{E}}
\newcommand{\HT}{\text{HT}}
\newcommand{\AIPW}{\text{AIPW}}
\newcommand{\cmid}{\,|\,}
\newcommand\indep{\protect\mathpalette{\protect\independenT}{\perp}}
\def\independenT#1#2{\mathrel{\rlap{$#1#2$}\mkern2mu{#1#2}}}
\theoremstyle{plain}
\newtheorem{lem}{Lemma}[section]
\theoremstyle{definition}
\newtheorem{dfn}[lem]{Definition}
\newcommand{\benum}{\begin{enumerate}}
\newcommand{\eenum}{\end{enumerate}}
\newcommand{\bitem}{\begin{itemize}}
\newcommand{\eitem}{\end{itemize}}
\newcommand{\barr}{\begin{array}}
\newcommand{\earr}{\end{array}}
\newcommand{\bmat}{\begin{pmatrix}}
\newcommand{\emat}{\end{pmatrix}}
\newcommand{\blist}{\renewcommand{\labelenumi}{\textbf{\arabic{enumi}}.} \begin{enumerate}}
\newcommand{\elist}{\end{enumerate} \renewcommand{\labelenumi}{\arabic{enumi}.}}
\def\bal#1\eal{\begin{align*}#1\end{align*}}
\newtheorem{theorem}{Theorem}
\newtheorem{corollary}{Corollary}
\newtheorem{lemma}{Lemma}
\newtheorem{assumption}{Assumption}
\tikzset{nv/.style={circle, color=red, fill=red, inner sep=0.5mm}}
\tikzset{rv/.style={circle, draw, thick, minimum size=7mm, inner sep=0.5mm}}
\tikzset{fv/.style={rectangle, draw, thick, minimum size=7mm, inner sep=0.5mm}}
\tikzset{lv/.style={circle, color=red, fill=gray!30, draw, thick, minimum size=7mm, inner sep=0.5mm}}
\tikzset{rve/.style={ellipse, draw, thick, minimum size=7mm, inner sep=0.5mm}}
\tikzset{rvs/.style={circle, draw, thick, minimum size=6mm, inner sep=0.5mm}}
\tikzset{fvs/.style={rectangle, draw, thick, minimum size=6mm, inner sep=0.5mm}}
\tikzset{lvs/.style={circle, color=red, fill=gray!30, draw, thick, minimum size=6mm, inner sep=0.5mm}}
\tikzset{rves/.style={ellipse, draw, thick, minimum size=6mm, inner sep=0.5mm}}
\tikzset{deg/.style={->, very thick, color=blue}}
\tikzset{degl/.style={->, very thick, color=red}}
\tikzset{beg/.style={<->, very thick, color=red}}
\tikzset{cdeg/.style={{Circle[length=+2pt 2.5,width=+2pt 2.5, fill=none]}->, very thick, color=blue}}
\tikzset{cceg/.style={{Circle[length=+2pt 2.5,width=+2pt 2.5, fill=none]}-{Circle[length=+2pt 2.5,width=+2pt 2.5, fill=none]}, very thick}}
\tikzset{uceg/.style={{Circle[length=+2pt 2.5,width=+2pt 2.5, fill=none]}-, very thick}}
\tikzset{ueg/.style={very thick}}
\definecolor{oxblue}{RGB}{0, 33, 71}
\def\var{\text{var}}
\def\vara{\text{var}_\text{a}}
\def\k{\text{K}}
\title{Strategy to select most efficient RCT samples based on observational data}
\author[1]{Wenqi Shi}
\author[2]{Xi Lin}
\affil[1]{Department of Industrial Engineering, Tsinghua University}
\affil[2]{Department of Statistics, University of Oxford}
\date{\textbf{Draft:} \today}
\begin{document}
\maketitle

\begin{abstract}
Randomized experiments can provide unbiased estimates of sample average treatment effects. However, estimates of population treatment effects can be biased when the experimental sample and the target population differ. In this case, the population average treatment effect can be identified by combining experimental and observational data. A good experiment design trumps all the analyses that come after. While most of the existing literature centers around improving analyses after RCTs, we instead focus on the design stage, fundamentally improving the efficiency of the combined causal estimator through the selection of experimental samples. We explore how the covariate distribution of RCT samples influences the estimation efficiency and derive the optimal covariate allocation that leads to the lowest variance. Our results show that the optimal allocation does  not necessarily follow the exact distribution of the target cohort, but adjusted for the conditional variability of potential outcomes. We formulate a metric to compare and choose from candidate RCT sample compositions. We also develop variations of our main results to cater for practical scenarios with various cost constraints and precision requirements. The ultimate goal of this paper is to provide practitioners with a clear and actionable strategy to select RCT samples that will lead to efficient causal inference.
\end{abstract}



\section{Introduction}
\subsection{Motivation}
There is growing interest in combining observational and experimental data to draw causal conclusions  \citep{hartman2015sate,athey2020combining,YangDing2020,Chen2021,Oberst2022,rosenman2020combining}. Experimental data from randomized controlled trials (RCTs) are considered the gold standard for causal inference and can provide unbiased estimates of average treatment effects. However, the scale of the experimental data is usually limited and the trial participants might not represent those in the target cohort. For example, the recruitment criteria for an RCT may prescribe that participants must be less than 65 years old and satisfy certain health criteria, whereas the target population considered for treatment may cover all age groups. This problem is known as the lack of transportability \citep{pearl2011transportability,rudolph2017robust}, generalizability \citep{cole2010generalizing,hernan2011compound,dahabreh2019extending}, representativeness \citep{campbell1957factors} and external validity \citep{rothwell2005external,westreich2019target}. By contrast, observational data usually has both the scale and the scope desired, but one can never prove that there is no hidden confounding. Any unmeasured confounding in the observational data may lead to a biased estimate of the causal effect. When it comes to estimating the causal effect in the target population, combining obervational and experimental data provides an avenue to exploit the benefits of both.

Existing literature has proposed several methods of integrating RCT and observational data to address the issue of the RCT population not being representative of the target cohort. \cite{kallus2018removing}, considered the case where the supports do not fully overlap, and proposed a linear correction term to approximate the difference between the causal estimates from observational data and experimental data caused by hidden confounding.\par


Sometimes even though the domain of observational data overlaps with the experimental data, sub-populations with certain traits may be over- or under-represented in the RCT compared to the target cohort. This difference can lead to a  biased estimate of the average treatment effect, and as a result, the causal conclusion may not be generalizable to the target population. In this case, reweighting the RCT population to make it applicable
to the target cohort is a common choice of remedy \citep{hartman2015sate,andrews2017weighting}. In particular, Inverse Probability of Sampling Weighting (IPSW) has been a popular estimator for reweighting \citep{cole2008constructing,cole2010generalizing,stuart2011use}. In this paper, we base our theoretical results 
on the IPSW estimator. 






\subsection{Design Trumps Analysis}
Most of the existing literature, including those discussed above, focuses on the analysis stage after RCTs are completed, and propose methods to analyse the data as given. This means, the analysis methods, including reweighting through IPSW, are to passively deal with the RCT data as they are. However, the quality of the causal inference is largely predetermined by the data collected. `Design trumps analysis' \citep{rubin2008objective}; a carefully designed experiment benefits the causal inference by far more than the analysis that follows. Instead of marginally improving through analyses, we focus on developing a strategy for the design phase, specifically the selection of RCT participants with different characteristics 
, to fundamentally improve the causal inference.\par

When designing an RCT sample to draw causal conclusions on the target cohort, a heuristic strategy that practitioners tend to opt for is to construct the RCT sample that looks exactly like a miniature version of target cohort. For example, suppose that we want to examine the efficacy of a drug on a target population consisting $30\%$ women and $70\%$ men. If the budget allows us to recruit 100 RCT participants in total, then the intuition is to recruit exactly $30$ females and $70$ males. This intuition definitely works, yet, is it efficient? We refer to the efficiency of the reweighted causal estimator for the average treatment effect in the target population, and specifically, its variance \footnote[1]{We note that the efficiency of an unbiased estimator  $T$ is formally defined as $e(T) = \frac{\mathcal{I}^{-1}(\theta)}{var(T)}$, that is, the ratio of its lowest possible variance over its actual variance. For our purpose, we do not discuss the behaviour of the Fisher information of the data but rather focus on reducing the variance of the estimators. With slight abuse of terminology, in this paper, when we say that one estimator is more efficient than another, we mean that the variance of the former is lower. Similarly, we say that an RCT sample is more efficient if it eventually leads to an estimator of lower variance.}.

In fact, we find that RCTs following the exact covariate distribution of the target cohort do not necessarily lead to the most efficient estimates after reweighting. Instead, our result suggests that the optimal covariate allocation of experiment samples is the target cohort distribution adjusted by the conditional variability of potential outcomes. Intuitively, this means that an optimal strategy is to sample more from the segments where the causal effect is more volatile or uncertain, even if they do not make up a large proportion of the target cohort.

\subsection{Contributions}
In this work, we focus on the common practice of generalizing the causal conclusions from an RCT to a target cohort. We aim at fundamentally improving the estimation efficiency by improving the selection of individuals into the trial, that is, the allocation of a certain number of places in the RCT to individuals of certain characteristics. We derive the optimal covariate allocation that minimizes the variance of the causal estimate of the target cohort. Practitioners can use this optimal allocation as a guide when they decide `who' to recruit for the trial. We also formulate a deviation metric that quantifies how far a given RCT allocation is from optimal, and practitioners can use this metric to decide when they are presented with several candidate RCT allocations to choose from.\par
We develop variations of the main results to cater for various practical scenarios such as where the total number of participants in the trial is fixed, or the total recruitment cost is fixed while unit costs can differ, or with different precision requirements: best overall precision, equal segment precision or somewhere in between. In this paper, we provide practitioners with a clear strategy and versatile tools to select the most efficient RCT samples.

\subsection{Outline} 
The remainder of this paper is organized at follows: In Section~\ref{sec:1setup}, we introduce the problem setting, notations, provide the main assumptions and provide more details on the IPSW estimator that we consider. In Section~\ref{sec:2res}, we derive the optimal covariate allocation for RCT samples to improve estimation efficiency, propose a deviation metric to assess candidate experimental designs and illustrate how this metric influences estimation efficiency. Section~\ref{sec:3estimate} provides an estimate of the optimal covariate allocation and the corresponding assumptions to ensure consistency. Section~\ref{sec:4pras} extends the main results and propose design strategies under other practical scenarios like heterogeneous unit cost and same precision requirement. In Section~\ref{sec:5numerical}, we use two numerical studies, a synthetic simulation and a semi-synthetic simulation with real-word data, to corroborate our theoretical results.



\section{Setup, Assumptions and Estimators} \label{sec:1setup}

\subsection{Problem Setup and Assumptions}
In this paper, we based our notations and assumptions on the potential outcome framework \citep{rubin1974estimating}. We assume to have two datasets: a RCT and an observational data. We also make the assumption that the target cohort of interest is contained in the observational data. \par
Define $S \in \{0,1\} $ as the sample indicator where $s = 1$ indicates membership of the experimental data and $s = 0$ the target cohort, where $T \in \{0,1\}$ as the treatment indicator and $t = 1$ indicates treatment and $t = 0$ indicates control. Let $Y_{is}^{(t)}$ denotes potential outcome for a unit $i$ assigned to data set $s$ and treatment $t$. We define $X$ as a set of observable pre-treatment variables, which can consist discrete and/or continuous variables. Let $n_0$, $n_1$, $n = n_0 + n_1$ denote the number of units in the target cohort, RCT, and the combined dataset, respectively. We use $f_1(x)$ and $f_0(x)$ to denote the distribution of $X$ in the RCT population and target cohort, respectively. 

The causal quantity of interest here is the average treatment effect (ATE) on the target population, denoted by $\tau$ .
\begin{dfn} (ATE on target cohort)
$$
 \tau := \E \left [Y^{(1))} - Y^{(0)} \mid S = 0 \right].
$$
\end{dfn}
We also define the CATE on the trial population, denoted by $\tau(x)$.
\begin{dfn} (CATE on trial population)
$$
 \tau(x) := \E \left [Y^{(1)} - Y^{(0)} \mid X = x, S = 1 \right].
$$
\end{dfn}

To ensure an unbiased estimator of the ATE on the target population after reweighting the estimates from the RCT, we need to make several standard assumptions.
\begin{assumption}(Identifiability of CATE in the RCT data)
\label{assump::identifiability1}
For all the observations in the RCT data, we assume the following conditions hold.
\begin{itemize}
    \item[(i)] 
    Consistency: $Y_{i} = Y_{i1}^{(t)}$ when $T=t$ and $S=1$;
     \item[(ii)] Ignorability:  $Y_{i}^{(t)} \indep T \mid (X, S =1)$;
     \item[(iii)] Positivity: $0 < \mathbb{P}(T=t \mid X, S =1) < 1$ for all $t \in \{0,1\}$.
\end{itemize}
\end{assumption}
The ignorability condition assumes that the experimental data is unconfounded and the positivity condition is guaranteed to hold in
conditionally randomized experiments.The igonrability and positivity assumptions combined is also referred to as strong ignorability. Under Assumption~\ref{assump::identifiability1}, the causal effect conditioned on $X = x$ in the experimental sample can be estimated without bias using:
\begin{eqnarray*}
\hat \tau(x) &=& \frac{ \sum_{S_i=1,X_i = x} \frac{T_i Y_i}{e(x)} - \frac{(1-T_i) Y_i }{ 1-e(x)}}{\sum_{S_i=1,X_i = x} 1},
\end{eqnarray*}
where $e(x) = \mathbb{P}(T=1 \mid X=x, S=1)$ is the probability of treatment assignment in the experimental sample. This estimator is also known as the Horvitz-Thompson estimator \citep{horvitz1952generalization}, which we will provide more details later in this section.

To make sure that we can `transport' the effect from the experimental data to the target cohort, we make the following transportability assumption.
\begin{assumption}(Transportability) 
\label{assump::transport}
$Y^{(t)} \indep S \mid (X, T=t)$.
\end{assumption}
Assumption~\ref{assump::transport} can be interpreted from several perspectives, as elaborated in \cite{hernan2010causal}. First, it assumes that all the effect modifiers are captured by the set of observable covariates $X$. Second, it also ensures that 
the treatment $T$ for different data stays the same. If the assigned treatment differs between the study population and the target population, then the magnitude of the causal effect of treatment will differ too. Lastly, the transportability assumption prescribes that there is no interference across the two populations. That is, treating one individual in one population does not interfere with the outcome of individuals in the other population.\par


Furthermore, we require the trial population fully overlaps with the the target cohort, so that we can reweight the CATE in the experimental sample to estimate the ATE in the target cohort. That is, for each individual in the target cohort, we want to make sure that we can find a comparable counterpart in the experimental sample with the same characteristics. 

\begin{assumption}
(Positivity of trial participation) 
\label{assump::positivity}
$0 < \mathbb{P}(S=1 \mid T=t, X = x) < 1$ for all $x \in \text{supp}(X\cmid S=0) $.
\end{assumption}
In Assumption~\ref{assump::positivity}, $\text{supp}(X\cmid S=0)$ denotes the support of the target cohort, in other words, the set of values that $X$ can take for individuals in the target cohort. Mathematically, $x \in \text{supp}(X\cmid S=0)$ is equivalent to $\mathbb{P}\left(\|X-x\| \leq \delta \mid S =  0\right)>0$, $\forall \delta>0$. Assumption~\ref{assump::positivity} requires that the support of the experimental sample includes the target cohort of interest.

\subsection{Estimators and related work}
Inverse Propensity (IP) weighted estimators were proposed by \cite{horvitz1952generalization} for surveys in which subjects are sampled with unequal probabilities.
\begin{dfn}(Horvitz-Thompson estimator)
\begin{align*}
    \widehat Y^{(t)}_{\HT} &=  \frac{1}{n_1} \sum_{i=1}^{n_1} \frac{I(T_i=t) Y_i}{\mathbb{P}\left(T_i = t \cmid X = X_i \right)},\\ 
    \hat{\tau}_{\text{HT}} &=  
    \widehat Y^{(1)}_{\HT} -  
    \widehat Y^{(0)}_{\HT} = \frac{1}{n_1} \sum_{i=1}^{n_1} \frac{T_i Y_i}{e\left(X_i\right)} - \frac{(1- T_i) Y_i}{1- e\left(X_i\right)},
\end{align*}
\end{dfn}
where the probability of treatment $e(X_i)$ is assumed to be known as we focus on the design phase of experiments. In practice, we can extend the Horvitz-Thompson estimator by replacing $e(x)$ with an estimate $\hat{e}(x)$, for example the Hajek estimator \citep{hajek1971comment} and the difference-in-means estimator.

\begin{dfn} (Augmented Inverse Propensity Weighted estimator)
\begin{eqnarray*}
   \widehat Y^{(1)}_{\AIPW} &=&\frac{1}{n_1} \sum_{i=1}^{n_1}\left[\hat{m}^{(1)}\left(X_i\right)+\frac{T_i}{\hat{e}\left(X_i\right)}\left(Y_i-\hat{m}^{(1)}\left(X_i\right)\right)\right], \\
  \widehat Y^{(0)}_{\AIPW}&=&\frac{1}{n_1} \sum_{i=1}^{n_1}\left[\hat{m}^{(0)}\left(X_i\right)+\frac{(1-T_i)}{1- \hat{e}\left(X_i\right)}\left(Y_i-\hat{m}^{(0)}\left(X_i\right)\right)\right], \\
  \hat{\tau}_{\text{\text{AIPW}}} & =& \frac{1}{n_1} \sum_{i=1}^{n_1} \frac{T_i(Y_i-\hat{m}^{(1)}(X_i))}{\hat{e}\left(X_i\right)} - \frac{(1-T_i)(Y_i-\hat{m}^{(0)}(X_i)))}{1-\hat{e}\left(X_i\right)} + \hat{m}^{(1)}(X_i) - \hat{m}^{(0)}(X_i),
\end{eqnarray*}
where $m^{(t)}(x)$ denotes the average outcome of treatment $t$ given covariate $X = x$, that is, $m^{(t)}(x)=\E [Y \mid T=t, X=x, S =1]$, and $\hat{m}^{(t)}(x)$ is an estimate of $m^{(t)}(x)$ \citep{robins1994correcting}. 
\end{dfn}
The estimator $\hat{\tau}_{\text{AIPW}}$ is doubly robust: $\hat{\tau}_{\text{AIPW}}$ is consistent if either (1) $\hat{e}\left(X_i\right)$ is consistent or (2) $\hat{m}^{(t)}(x)$ is consistent.

\begin{dfn}(Inverse Propensity Sample Weighted (IPSW) estimator)
\begin{eqnarray*}
\hat{\tau}_{\text{IPSW}}^* &=& \frac{1}{n_1} \sum_{i \in \{i:S_i = 1\}} w\left(X_i\right){\left(\frac{Y_i A_i}{e(X_i)}-\frac{Y_i\left(1-A_i\right)}{1-e(X_i)}\right)}, 
\\
w(x) &=& 
\frac{f_0(X)}{f_1(X)}.
\end{eqnarray*}
\end{dfn}
We can see that the IPSW estimator extends the Horvitz-Thompson estimator by adding a weight $w(x)$, which is the ratio between the probably of observing an individual with characteristics $X = x$ in the trial population that in the target population \citep{stuart2011use}\footnote[2]{The definition of the weight $w(x)$ differs slightly from that in \cite{stuart2011use}, where $w(x)$ is defined as $\frac{P\left(S =1 \mid X=x\right)}{P\left(S = 0 \mid X = x\right)}$. That is, the ratio of the distribution of being selected into the trail over being selected into the target cohort. This definition is based on the problem setting where there is a super population which the target cohort and trial cohort are sampled from. Our definition here agrees with that in \cite{colnet2022reweighting}.}. We use an asterisk in the notation to denote that it is an oracle definition where we assume both $f_1(X)$ and $f_0(X)$ are known, which is probably unrealistic.  The IPSW estimator of average treatment effect on target cohort is proven to be unbiased under Assumptions~\ref{assump::identifiability1}--~\ref{assump::positivity}.\par 
A concurrent study of high relevance to our work by \cite{colnet2022reweighting} investigated performance of IPSW estimators. In particular, they defined different versions of IPSW estimators, where $f_1(X)$ and $f_0(X)$ are either treated as known or estimated, and derived the expressions of asymptotic variance for each version. They concluded that the semi-oracle estimator, where $f_1(X)$ is estimated and $f_0(X)$ is treated as known, outperforms the other two versions giving the lowest asymptotic variance.

\begin{dfn}\label{dfn:semioracleIPSW}(Semi-oracle IPSW estimator, \cite{colnet2022reweighting})
\begin{eqnarray*}
\hat{\tau}_{\text{IPSW}} &=& \frac{1}{n_1} \sum_{i \in \{i:S_i = 1\}} \frac{f_0(X_i)}{\hat{f_1}(X_i)}{\left(\frac{Y_i A_i}{e(X_i)}-\frac{Y_i\left(1-A_i\right)}{1-e(X_i)}\right)}, \text{ and} \\
\hat{f_1}(x) &=& \frac{1}{n_1} \sum_{S_i=1} \mathbbm{1}_{X_i = x}.
\end{eqnarray*}
\end{dfn}

 The re-weighted ATE estimator we use in this paper, $\hat{\tau}$, coincides with their semi-oracle IPSW estimator defined above, where $f_1(X)$ is estimated from the RCT data.

\section{Main Results}
\label{sec:2res}
In this section, we start with the case where the number of possible covariate values is finite and derive the optimal covariate allocation of RCT samples that minimizes the variance of the ATE estimate, $\hat{\tau}$. We then develop a deviation metric, $\mathcal{D}(f_1)$, that quantifies how much a candidate RCT sample composition with covariate distribution $f_1$ deviates from the optimal allocation. We prove that this deviation metric, $\mathcal{D}(f_1)$, is proportional to the variance of $\hat{\tau}$ therefore it can be used as a metric for selection. Finally, we derive the above results in presence of continuous covariates.
\label{sec:theoreticalresults}

\subsection{Variance-Minimizing RCT Covariate Allocation}
We first consider the more straight-forward case, where the number of possible covariate values is finite. 
Recall that $e(x)$ denotes the propensity score,
We assume that the exact value of $e(x)$ is known for the RCT.

When units in the experimental dataset cover all the possible covariate values, for $m=1,\ldots,M$, recall the Horvitz-Thompson inverse-propensity weighted estimators \citep{horvitz1952generalization} of CATE:
\begin{eqnarray*}
\hat \tau(x_m) &=& \frac{ \sum_{S_i=1,X_i = x_m} \frac{T_i Y_i}{e(x_m)} - \frac{(1-T_i) Y_i }{ 1-e(x_m)}}{\sum_{S_i=1,X_i = x_m} 1}.
\end{eqnarray*}

Discrete covariates can be furthered divided into two types: ordinal, for example, test grade,  and categorical such as  blood type. For ordinal covariates, we can construct a smoother estimator by applying kernel-based local averaging:
\begin{eqnarray*}
    \hat \tau_\k (x_m) = \frac{\frac{1}{n_1 h^k} \sum_{S_i=1} \left( \frac{T_i Y_i}{ e (X_i)} - \frac{(1-T_i) Y_i}{1- e (X_i)}\right) K \left(\frac{X_i -x_m}{h}\right)}{\frac{1}{n_1  h^k} \sum_{S_i = 1}  K\left(\frac{X_i -x_m}{h}\right)},
\end{eqnarray*}
where $K(\cdot)$ is kernel function and $h$ is the smoothing parameter. Conceptually, the kernel function measures how individuals with covariates in proximity to $x_m$ influence the estimation of $\hat \tau_\k (x_m)$.
This kernel-based estimator works even if the observational data does not fully overlap with the experimental data. The estimator $\hat\tau_\k$ is inspired by \cite{abrevaya2015cate}, who used it to estimate the CATE. Specifically, if the covariate is ordinal and the sample size of a sub-population with a certain covariate value is small or even zero, we can consider $\hat \tau_\k(x)$, as it applies local averaging so that each CATE is informed by more data.

To study the variance of CATE estimates $\hat \tau(x_m)$ and $\hat \tau_\k (x_m)$, we define the following terms:
\begin{eqnarray*}
    \sigma_\psi^2(x) &=& \E\left[ \left( \psi(X,Y,T) - \tau(x) \right)^2 \mid X=x, S=1 \right],\\
    \psi(x,y,t) &=& \frac{t(y-m^{(1)}(x))}{e(x)} - \frac{(1-t)(y-m^{(0)}(x))}{1-e(x)} + m^{(1)}(x) - m^{(0)}(x).
\end{eqnarray*}
The random vector $\psi(X,Y,T)$ is the influence function of the AIPW estimator \citep{bang2005doubly}. Term $\sigma_\psi^2(x)$ measures the conditional variability of the difference in potential outcomes given covariate $X = x$, and $m^{(t)}(x)$ denotes the average outcome with treatment $t$ given covariate $X = x$.

\begin{assumption}\label{con::clt}
As $n$ goes to infinity, $n_1/n$ has a limit in $(0,1)$.
\end{assumption}
Assumption~\ref{con::clt} suggests that when we consider the asymptotic behavior of our estimators, sample sizes for both experimental data and observational data go to infinity, though usually there is more observational samples than experimental samples.

\begin{theorem}\label{thm::fclt}
Under Assumption~\ref{con::clt}, for $m=1,\ldots, M$, we have
\begin{eqnarray*}
\sqrt {n_1} (\hat \tau(x_m) - \tau(x_m)) &\stackrel{d}{\rightarrow}& N \left( 0, \frac{\sigma_\psi^2(x_m)}{f_1(x_m)} \right),\\
\sqrt{n_1  h} (\hat \tau_\k (x_m) - \tau(x_m)) &\stackrel{d}{\rightarrow}&
\mathcal{N} \left( 0, \frac{\Vert K \Vert_2^2 \sigma_\psi^2(x_m) }{f_1(x_m)} \right),
\end{eqnarray*}
where $\Vert K \Vert_2 = (\int K(u)^2 du)^{1/2}$.
\end{theorem}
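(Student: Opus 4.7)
The plan is to treat each of the two convergence claims as a central limit theorem for an IID local average of RCT observations, with Slutsky's theorem absorbing the random denominators.

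For the first (unsmoothed) claim I would write
\[
\hat\tau(x_m) \;=\; \frac{1}{N_m}\sum_{i\in\mathcal{I}_m}\psi_i, \qquad \psi_i \;:=\; \frac{T_iY_i}{e(x_m)} - \frac{(1-T_i)Y_i}{1-e(x_m)},
\]
where $\mathcal{I}_m=\{i:S_i=1,\,X_i=x_m\}$ and $N_m=|\mathcal{I}_m|$. By the SLLN, $N_m/n_1 \tendas f_1(x_m)$, and by Assumption~\ref{con::clt}, $N_m\to\infty$ almost surely. Conditional on the realized index set $\mathcal{I}_m$, the summands $\psi_i$ are iid; under Assumption~\ref{assump::identifiability1} their conditional mean equals $m^{(1)}(x_m)-m^{(0)}(x_m)=\tau(x_m)$, and their conditional second central moment is identifiable with $\sigma_\psi^2(x_m)$. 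The classical IID CLT along the random index $N_m$ then gives $\sqrt{N_m}(\hat\tau(x_m)-\tau(x_m))\tendd\mathcal{N}(0,\sigma_\psi^2(x_m))$, and multiplying by $\sqrt{n_1/N_m}\tendp f_1(x_m)^{-1/2}$ via Slutsky yields the first display.

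For the kernel-smoothed version I would view $\hat\tau_\k(x_m)$ as a Nadaraya--Watson ratio $\hat g(x_m)/\hat f_1(x_m)$, where the numerator is the kernel-weighted average of the IPW pseudo-outcomes $\psi_i$ and the denominator is the Rosenblatt--Parzen kernel estimator of $f_1(x_m)$. Under the standard conditions on $K$ (symmetric, bounded, $\int K=1$, $\int K^2<\infty$), the bandwidth regime $h\to 0$ with $n_1 h\to\infty$, and sufficient smoothness of $f_1$ and $f_1\tau$ at $x_m$, I would invoke the kernel-regression CLT in the form used by Abrevaya et al.\ (2015), already cited in the text. The numerator decomposes as a leading stochastic term with asymptotic variance $\Vert K\Vert_2^2\,f_1(x_m)\,\sigma_\psi^2(x_m)$ plus a smoothing bias of order $O(h^p)$ that is killed by the bandwidth condition; linearizing the ratio about $(f_1(x_m)\tau(x_m),f_1(x_m))$ and combining with $\hat f_1(x_m)\tendp f_1(x_m)$ via Slutsky delivers the stated limit.

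The first part is essentially bookkeeping; the one subtlety is matching the conditional second moment of the IPW summand $\psi_i$ with the quantity $\sigma_\psi^2(x_m)$, which the paper defines through the AIPW influence function, so the identity $\E[(\psi_i-\tau(x_m))^2\mid X=x_m,S=1]=\sigma_\psi^2(x_m)$ (or the appropriate translation between them) needs explicit verification. The technical crux is Part (b): one must carefully decompose the kernel-weighted numerator into its stochastic leading term and a bias term, verify a Lindeberg condition for the triangular array of kernel-weighted summands, and ensure the bandwidth choice forces $\sqrt{n_1 h}\cdot O(h^p)\to 0$ so that the bias does not contaminate the limit. After these two ingredients, the ratio linearization and Slutsky step are standard.
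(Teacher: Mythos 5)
The appendix of the paper contains no proof of Theorem~\ref{thm::fclt} (the proof section is marked ``To be continued''), so your proposal can only be judged on its own terms. The overall architecture you describe --- a cell-level IID central limit theorem along the random index $N_m$ combined with Slutsky for the first display, and a Nadaraya--Watson linearization with an undersmoothing condition for the second --- is the natural route, and your remark that the kernel result silently requires $h\to 0$, $n_1h\to\infty$ and a bias-killing bandwidth identifies a real omission in the theorem statement itself.

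However, the step you defer as ``bookkeeping'' is exactly where the argument breaks. The identity $\E[(\psi_i - \tau(x_m))^2 \mid X = x_m, S=1] = \sigma_\psi^2(x_m)$ is false for your summand $\psi_i = T_iY_i/e(x_m) - (1-T_i)Y_i/(1-e(x_m))$. A direct computation (using $T(1-T)=0$ and $T\indep (Y^{(0)},Y^{(1)})\mid X,S=1$) gives
\begin{equation*}
\var\bigl(\psi_i \mid X=x_m, S=1\bigr) \;=\; \sigma_\psi^2(x_m) \;+\; \left( m^{(1)}(x_m)\sqrt{\tfrac{1-e(x_m)}{e(x_m)}} \;+\; m^{(0)}(x_m)\sqrt{\tfrac{e(x_m)}{1-e(x_m)}} \right)^{2},
\end{equation*}
which strictly exceeds $\sigma_\psi^2(x_m)$ unless $(1-e)m^{(1)} + e\,m^{(0)} = 0$ at $x_m$. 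This is the classical Horvitz--Thompson versus H\'ajek gap: $\sigma_\psi^2(x_m)$ is the variance of the \emph{AIPW} influence function, and the raw IPW summand with known $e(x_m)$ does not attain it. Your argument as written therefore proves a valid CLT but with the wrong (larger) asymptotic variance. To reach the stated limit you must either (i) replace $e(x_m)$ by the within-cell empirical treatment fraction, so that $\hat\tau(x_m)$ becomes the difference of treated and control sample means in cell $x_m$, whose two-sample CLT variance is exactly $\var(Y^{(1)}\mid x_m)/e(x_m) + \var(Y^{(0)}\mid x_m)/(1-e(x_m)) = \sigma_\psi^2(x_m)$, or (ii) carry the augmentation terms $m^{(t)}(x_m)$ through the summand so that it coincides with $\psi(X,Y,T)$ as the paper defines it. The same correction is needed in the numerator of the kernel-smoothed estimator, where your claimed leading-term variance $\Vert K\Vert_2^2\, f_1(x_m)\,\sigma_\psi^2(x_m)$ again presumes the AIPW-centered pseudo-outcome rather than the raw IPW one.
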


Theorem~\ref{thm::fclt} shows the asymptotic distribution of the two CATE estimators for every possible covariate value. Complete randomization in experiments ensures that $\hat \tau(x)$ is unbiased. Based on the idea of IPSW estimator, we then construct the following two reweighted estimators for ATE:
$$
    \hat \tau = \sum_{m=1}^M f_0(x_m) \hat \tau(x_m), \quad
    \hat \tau_\k = \sum_{m=1}^M f_0(x_m) \hat \tau_\k(x_m).
$$
It is easy to see that the $\hat{\tau}$ above is the same as the semi-oracle IPSW estimator defined in Definition~\ref{dfn:semioracleIPSW} once we substitute in the expression of $\hat{\tau}(x_m)$.


\begin{theorem}\label{thm::covreal}
Under Assumption~\ref{assump::identifiability1}--\ref{con::clt}, we have
\begin{eqnarray*}
    n_1 \var(\hat \tau) &=& \sum_{m=1}^M f_0^2(x_m)\frac{\sigma_\psi^2(x_m)}{f_1(x_m)},\\
    n_1 h \vara(\hat \tau_\k) &=& \Vert K \Vert_2^2 \sum_{m=1}^M f_0^2(x_m)\frac{\sigma_\psi^2(x_m)}{f_1(x_m)},
\end{eqnarray*}
where $\vara(\cdot)$ denotes the asymptotic variance. For $m=1, \ldots, M$, the optimal covariate RCT distribution to minimize both $\var(\hat \tau)$ and $\vara(\hat \tau_\k)$ is 
\begin{eqnarray*}
    f_1^*(x_m) = \frac{f_0(x_m) \sigma_\psi(x_m)}{\sum_{j=1}^M f_0(x_j) \sigma_\psi(x_j) }.
\end{eqnarray*}
\end{theorem}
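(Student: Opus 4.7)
The plan is to split the proof into two pieces: first derive the asymptotic variance of $\hat\tau$ (and $\hat\tau_\k$) as a weighted sum of the per-stratum variances from Theorem~\ref{thm::fclt}, then minimize the resulting expression over $f_1$ subject to $\sum_m f_1(x_m) = 1$ using Cauchy--Schwarz.

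For the variance derivation, I would write $\hat\tau = \sum_{m=1}^M f_0(x_m)\hat\tau(x_m)$ and condition on the stratum counts $n_{1,m} := \#\{i : S_i=1,\, X_i=x_m\}$, which are jointly $\mathrm{Multinomial}(n_1; f_1(x_1),\ldots,f_1(x_M))$ and concentrate on their means $n_1 f_1(x_m)$ under Assumption~\ref{con::clt}. Conditional on $(n_{1,m})_m$, the estimators $\hat\tau(x_m)$ average over disjoint index sets of i.i.d.\ RCT units, so they are mutually independent, each with conditional mean $\tau(x_m)$ (under Assumption~\ref{assump::identifiability1}) and conditional variance $\sigma_\psi^2(x_m)/n_{1,m}$ coming from the influence-function decomposition. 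By the law of total (co)variance the cross terms vanish and
\begin{equation*}
\var(\hat\tau) = \sum_{m=1}^M f_0(x_m)^2\, \E\!\left[\frac{\sigma_\psi^2(x_m)}{n_{1,m}}\right] \;\sim\; \frac{1}{n_1}\sum_{m=1}^M \frac{f_0(x_m)^2 \sigma_\psi^2(x_m)}{f_1(x_m)},
\end{equation*}
which yields the first displayed formula. For $\hat\tau_\k$, I would repeat the argument, invoking the kernel-smoothed CLT in Theorem~\ref{thm::fclt} to pick up the extra factor $\Vert K\Vert_2^2$ and rate $n_1 h$; the same disjoint-support / asymptotic-independence reasoning for the local averages gives the second formula.

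For the optimization, treat $g(f_1) := \sum_m f_0(x_m)^2 \sigma_\psi^2(x_m)/f_1(x_m)$ as a strictly convex function on the probability simplex (both expressions differ only by the constant $\Vert K\Vert_2^2$, so they are minimized at the same $f_1$). I would apply Cauchy--Schwarz:
\begin{equation*}
\Bigl(\sum_{m=1}^M f_0(x_m)\sigma_\psi(x_m)\Bigr)^2 = \Bigl(\sum_{m=1}^M \frac{f_0(x_m)\sigma_\psi(x_m)}{\sqrt{f_1(x_m)}}\cdot \sqrt{f_1(x_m)}\Bigr)^2 \le g(f_1)\cdot \sum_{m=1}^M f_1(x_m) = g(f_1),
\end{equation*}
with equality iff $\sqrt{f_1(x_m)} \propto f_0(x_m)\sigma_\psi(x_m)/\sqrt{f_1(x_m)}$, i.e.\ $f_1(x_m) \propto f_0(x_m)\sigma_\psi(x_m)$. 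Normalizing to a probability mass function yields the stated $f_1^*$. (As a sanity check one could redo this via a Lagrange multiplier on $\sum_m f_1(x_m)=1$, which gives $f_1(x_m)^2 \propto f_0(x_m)^2\sigma_\psi^2(x_m)$ and the same answer.)

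The only subtle point I expect is the first step: because $n_{1,m}$ is random and can be zero, one needs to argue carefully that the conditional-variance bookkeeping goes through asymptotically, e.g.\ by restricting to the event $\{n_{1,m} \ge 1 \text{ for all } m\}$, whose complement has probability decaying to $0$ under Assumption~\ref{con::clt} and $f_1(x_m) > 0$, and by showing $\E[n_1/n_{1,m}] \to 1/f_1(x_m)$ via a standard truncation / dominated-convergence argument for the multinomial. Once this is handled, the rest of the proof is a clean combination of Theorem~\ref{thm::fclt} and Cauchy--Schwarz; the minimization step itself is routine.
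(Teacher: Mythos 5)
Your proposal is correct and reaches the stated conclusions by essentially the same skeleton as the paper (per-stratum asymptotics from Theorem~\ref{thm::fclt}, variance of the weighted sum, then constrained minimization), but it differs in the one step where the paper actually supplies an argument. For the variance formulas the paper merely asserts that ``a simple calculation'' combining Theorem~\ref{thm::fclt} with the definition of $\hat\tau$ gives the result; your conditioning on the multinomial stratum counts $n_{1,m}$, the law of total variance with vanishing cross terms, and the truncation argument for $\E[n_1/n_{1,m}]\to 1/f_1(x_m)$ is precisely the bookkeeping needed to justify that assertion, and your use of an asymptotic equivalence rather than exact equality is the honest reading of the displayed identities. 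For the minimization, the paper invokes Lemma~\ref{lem::minimum}, proved by eliminating $x_n=1-\sum_{i<n}x_i$, setting partial derivatives to zero and checking diagonal second partials; you instead use Cauchy--Schwarz, $\bigl(\sum_m f_0(x_m)\sigma_\psi(x_m)\bigr)^2\le g(f_1)\sum_m f_1(x_m)=g(f_1)$ with equality iff $f_1\propto f_0\sigma_\psi$. Your route is self-contained and arguably tighter: it yields a global lower bound with an explicit equality condition (the paper's diagonal-second-derivative check does not by itself certify a global minimum --- one needs convexity of $\sum_i a_i^2/x_i$ on the simplex, which your argument sidesteps), and the lower bound $\bigl(\sum_m f_0(x_m)\sigma_\psi(x_m)\bigr)^2$ is exactly the leading constant in Corollary~\ref{cor::Dmetrix}, so the deviation-metric decomposition falls out of your inequality for free. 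Both approaches produce the same optimizer $f_1^*$.
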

Theorem~\ref{thm::covreal} indicates that even if the covariate distribution of experimental data is exactly the same as that of the target cohort, it does not necessarily produce the most efficient estimator. The optimal RCT covariate distribution also depends on the conditional variability of potential outcomes. In fact, $f_1^*$ is essentially the target covariate distribution adjusted by the variability of conditional causal effects. This result suggests that we should sample relatively more individuals from sub-populations where the causal effect is more volatile, even if they do not take up a big proportion of the target cohort. Moreover, the two estimators, $\hat{\tau}$ and $\hat \tau_\k$, share the same optimal covariate weight no matter whether local averaging is applied. 

In practice, if the total number of samples is fixed, experiment designers can select RCT samples with covariate allocation identical to $f_1^*$ to improve the efficiency of IPSW estimate.
 
\subsection{Deviation Metric}

\begin{corollary}\label{cor::Dmetrix}
Under Assumption~\ref{assump::identifiability1}--\ref{con::clt}, we have
\begin{eqnarray*}
    n_1 \var(\hat \tau) &= & 
    \left( \sum_{m=1}^M f_0(x_m) \sigma_\psi(x_m)  \right)^2 \times
    \left\{ \mathcal{D}(f_1) + 1\right\}
    \\
    n_1 h \vara(\hat \tau_\k) &= & 
    \Vert K \Vert_2^2 \left( \sum_{m=1}^M f_0(x_m) \sigma_\psi(x_m)  \right)^2 \times
    \left\{ \mathcal{D}(f_1) + 1\right\},
\end{eqnarray*}
where $\var_{1}(\cdot) = \var_{X \mid S=1}(\cdot)$, and we define
$$
\mathcal{D}(f_1) = \var_{1} \left( \frac{f_1^*(X)}{f_1(X)}\right)
$$
as the deviation metric of experiment samples as it measures the difference between the optimal covariate distribution $f_1^*$ and the real covariate distribution $f_1$. We have $\mathcal{D}(f_1) \geq 0$, and $\mathcal{D}(f_1^*) = 0$ if and only if the real covariate distribution of experiment samples is identical to the optimal one, i.e. ${f_1^*(x)} = {f_1(x)}$ for $\forall x \in \{ x_1, \ldots, x_M \}$.
\end{corollary}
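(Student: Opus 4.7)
The plan is to start from the variance formula already established in Theorem~\ref{thm::covreal}, namely
\[
n_1 \var(\hat \tau) = \sum_{m=1}^M f_0^2(x_m) \frac{\sigma_\psi^2(x_m)}{f_1(x_m)},
\]
and factor out the normalizing constant $C := \sum_{j=1}^M f_0(x_j)\sigma_\psi(x_j)$ appearing in the expression for $f_1^*$. Since $f_0(x_m)\sigma_\psi(x_m) = C\, f_1^*(x_m)$ by definition of $f_1^*$, the sum rewrites as $C^2 \sum_{m=1}^M (f_1^*(x_m))^2/f_1(x_m)$. The entire argument therefore reduces to recognizing that this last sum equals $\mathcal{D}(f_1) + 1$.

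Next I would compute the two moments of the ratio $f_1^*(X)/f_1(X)$ under $X \sim f_1$. The first moment is
\[
\E_1\!\left[\frac{f_1^*(X)}{f_1(X)}\right] = \sum_{m=1}^M f_1(x_m) \cdot \frac{f_1^*(x_m)}{f_1(x_m)} = \sum_{m=1}^M f_1^*(x_m) = 1,
\]
because $f_1^*$ is itself a probability distribution; the second moment is $\sum_{m=1}^M (f_1^*(x_m))^2/f_1(x_m)$. Plugging into $\mathcal{D}(f_1) = \E_1[(f_1^*/f_1)^2] - (\E_1[f_1^*/f_1])^2$ gives precisely $\sum_m (f_1^*(x_m))^2/f_1(x_m) - 1$, so the bracketed factor $\mathcal{D}(f_1)+1$ drops out of the calculation. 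The same algebra applied to the asymptotic-variance formula for $\hat\tau_\k$ (which differs only by the factor $\Vert K\Vert_2^2$) yields the second identity without extra work.

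For the final claims, nonnegativity $\mathcal{D}(f_1) \geq 0$ is immediate since $\mathcal{D}(f_1)$ is a variance. The equality case requires $f_1^*(X)/f_1(X)$ to be constant on the support of $f_1$; combined with $\sum_m f_1^*(x_m) = \sum_m f_1(x_m) = 1$ over $\{x_1,\ldots,x_M\}$ (and the positivity of $f_1$ needed to define the ratio everywhere), this constant must equal $1$, forcing $f_1 \equiv f_1^*$.

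There is no real obstacle here; the whole statement is an algebraic reparametrization of Theorem~\ref{thm::covreal} designed to isolate a single scalar $\mathcal{D}(f_1)$ that measures distance from optimality. The only minor subtlety worth flagging in the write-up is ensuring $f_1(x_m) > 0$ whenever $f_0(x_m) > 0$ so that the ratio $f_1^*/f_1$ is well-defined wherever it contributes; this is already guaranteed by Assumption~\ref{assump::positivity}, and any $x_m$ with $f_0(x_m) = 0$ contributes zero to both sides and can be dropped from all sums.
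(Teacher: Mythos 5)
Your proposal is correct and follows essentially the same route as the paper: both reduce the corollary to the variance decomposition $\E_1[Z^2]=\var_1(Z)+(\E_1 Z)^2$ applied to the ratio $f_0(X)\sigma_\psi(X)/f_1(X)$ (you merely normalize by $C=\sum_j f_0(x_j)\sigma_\psi(x_j)$ before taking moments, which makes the mean exactly $1$, whereas the paper factors $C^2$ out at the end). Your explicit treatment of the equality case $\mathcal{D}(f_1)=0$ and of the positivity needed for the ratio to be well-defined is a small addition the paper leaves implicit, and it is argued correctly.
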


Accoring to Corollary~\ref{cor::Dmetrix}, the variance of $\hat \tau$ depends on two parts: the first part $ \sum_{m=1}^M f_0(x_m) \sigma_\psi(x_m)$ depends on the true distribution of the target population, while the second part $\mathcal{D}(f_1)$ is a measure of the deviation of the RCT sample allocation $f_1$ compared to the optimal variability-adjusted allocation $f_1^*$, and can thus reflect the representativeness of our RCT samples.
As Corollary~\ref{cor::Dmetrix} shows, the variance of IPSW estimator for the population, $\hat \tau$, is proportional to $\mathcal{D}(f_1)$. 

The deviation metric equips us with a method to compare candidate experiment designs. To be specific, if experiment designers have several potential plans for RCT samples, they can choose one with the smallest deviation metric to maximize the estimation efficiency.
 
\subsection{Including Continuous Covariates}

For continuous covariates, for instance, body mass index (BMI), we apply stratification based on propensity score. By considering an appropriate partition of the support $\{A_1, \ldots, A_L\}$ with finite $L \in \mathbb{N}$, we can turn it into the discrete case above. 

\begin{assumption}\label{con::cont}
For $l = 1, \ldots, L$, and $x, x^\prime \in A_l$, we have
\begin{itemize}
    \item[(i)] $\Pr(T=1 \mid X=x) = \Pr (T=1 \mid X = x^\prime)$;
    \item[(ii)] $\E(Y^{(1)} -Y^{(0)} \mid X=x, S=1 ) = \E(Y^{(1)} -Y^{(0)} \mid X=x^\prime, S=1)$.
\end{itemize}
\end{assumption}

Assumption~\ref{con::cont} assumes that units within each stratum share the same propensity score and CATE. This is a strong but reasonable condition if we make each stratum $A_l$ sufficiently small. Under Assumption~\ref{con::cont}, let $\hat\tau(A_l)$, $\hat\tau_\k(A_l)$, $\sigma_\psi^2(A_l)$, $e(A_l)$ denote the causal effect estimate, causal effect estimate with kernel-based local averaging, variance of influence function, propensity score, that are conditioned on $X \in A_l$. Let $f_0(A_l) = \Pr(X \in A_l \mid S=0)$ and $f_1(A_l) = \Pr(X \in A_l \mid S=1)$. We can then construct two IPSW estimators:
$$
    \hat \tau = \sum_{l=1}^L f_0(A_l) \hat \tau(A_l), \quad
    \hat \tau_\k = \sum_{l=1}^L f_0(A_l) \hat \tau_\k(A_l).
$$

As shown in Corollary~\ref{thm::covc}, we have similar results to Theorem~\ref{thm::covreal}, but instead of the optimal covariate distribution, we derive the optimal probability on each covariate set $A_l$.

\begin{corollary}\label{thm::covc}
Under Assumption~\ref{assump::identifiability1}--~\ref{con::cont}, we have
\begin{eqnarray*}
    n_1 \var(\hat \tau) &=& \sum_{l=1}^L f_0^2(A_l)\frac{\sigma_\psi^2(A_l)}{f_1(A_l)},\\
    n_1 h \vara(\hat \tau_\k) &=& \Vert K \Vert_2^2 \sum_{l=1}^L f_0^2(A_l)\frac{\sigma_\psi^2(A_l)}{f_1(A_l)}.
\end{eqnarray*}
For $l=1, \ldots, L$, the optimal distribution on each covariate set to minimize both $\var(\hat \tau)$ and $\vara(\hat \tau_\k)$ is 
\begin{eqnarray*}
    f_1^*(A_l) = \frac{f_0(A_l) \sigma_\psi(A_l)}{\sum_{j=1}^L f_0(A_j) \sigma_\psi(A_j) }.
\end{eqnarray*}
Moreover,
\begin{eqnarray*}
    \sum_{l=1}^L f_0^2(A_l)\frac{\sigma_\psi^2(A_l)}{f_1(A_l)}
      &=& 
    \left( \sum_{l=1}^L f_0(A_l) \sigma_\psi(A_l)  \right)^2 \times
    \left\{ \mathcal{D}(f_1) + 1\right\},
\end{eqnarray*}
where $A(x) = \{ A: x \in A; A \in \{A_1, \ldots, A_L\}\}$.
\end{corollary}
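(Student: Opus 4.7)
The plan is to reduce the continuous-covariate statement to the discrete case already treated in Theorem~\ref{thm::covreal} and Corollary~\ref{cor::Dmetrix}, using Assumption~\ref{con::cont} as the bridge. First I would observe that, under Assumption~\ref{con::cont}, both the propensity score $e(x)$ and the CATE $\tau(x)$ are constant on each stratum $A_l$, and the influence function $\psi(X,Y,T)$ depends on $x$ only through $e(x)$ and $m^{(t)}(x)$. Hence $\sigma_\psi^2(A_l) := \E[(\psi(X,Y,T)-\tau(A_l))^2 \mid X \in A_l, S=1]$ is well defined, and the estimators $\hat\tau(A_l)$ and $\hat\tau_\k(A_l)$ are just the discrete-case estimators applied to the coarsened covariate $X' := l$ whenever $X \in A_l$. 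An immediate invocation of Theorem~\ref{thm::fclt}, with the stratum label playing the role of $x_m$, then yields $\sqrt{n_1}(\hat\tau(A_l)-\tau(A_l)) \tendd \mathcal{N}(0, \sigma_\psi^2(A_l)/f_1(A_l))$ and the analogous kernel-smoothed statement.

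Next I would verify that the per-stratum estimators are asymptotically independent across $l$, which follows because the RCT samples falling into distinct strata are disjoint subsamples drawn from conditional distributions with disjoint support; this is the same independence already used implicitly in Theorem~\ref{thm::covreal}. Summing $\hat\tau = \sum_l f_0(A_l)\hat\tau(A_l)$ and using asymptotic independence then gives directly
\begin{equation*}
 n_1 \var(\hat\tau) = \sum_{l=1}^L f_0^2(A_l)\,\frac{\sigma_\psi^2(A_l)}{f_1(A_l)},
\end{equation*}
with the analogous $\|K\|_2^2$-factor formula for $\hat\tau_\k$.

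For the optimal allocation, I would minimize the right-hand side over $f_1(A_1),\ldots,f_1(A_L)$ subject to $\sum_l f_1(A_l)=1$, using either a Lagrange multiplier or the Cauchy--Schwarz inequality
\begin{equation*}
 \Bigl(\sum_{l=1}^L f_0(A_l)\sigma_\psi(A_l)\Bigr)^2 \le \Bigl(\sum_{l=1}^L \frac{f_0^2(A_l)\sigma_\psi^2(A_l)}{f_1(A_l)}\Bigr)\Bigl(\sum_{l=1}^L f_1(A_l)\Bigr),
\end{equation*}
with equality iff $f_1(A_l) \propto f_0(A_l)\sigma_\psi(A_l)$, which is exactly the claimed $f_1^*(A_l)$.

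Finally, to obtain the deviation-metric identity, I would rewrite the variance as $\sum_l f_1(A_l)\bigl(f_0(A_l)\sigma_\psi(A_l)/f_1(A_l)\bigr)^2$, substitute $f_0(A_l)\sigma_\psi(A_l) = C\, f_1^*(A_l)$ with $C=\sum_j f_0(A_j)\sigma_\psi(A_j)$, and recognise the resulting sum as $C^2\,\E_{X\mid S=1}\bigl[(f_1^*(A(X))/f_1(A(X)))^2\bigr]$. Since $\E_{X\mid S=1}[f_1^*(A(X))/f_1(A(X))] = \sum_l f_1^*(A_l) = 1$, this expectation equals $\mathcal{D}(f_1)+1$, closing the proof. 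The only non-routine step is the independence-across-strata argument used to additively decompose $\var(\hat\tau)$; everything else is a direct transcription of the discrete results once Assumption~\ref{con::cont} is in force.
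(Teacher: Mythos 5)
Your reduction to the discrete case via the coarsened covariate, followed by Lemma~\ref{lem::minimum} (equivalently your Cauchy--Schwarz step) and the same variance-decomposition algebra, is exactly the route the paper intends: Corollary~\ref{thm::covc} is presented as a direct transcription of Theorem~\ref{thm::covreal} and Corollary~\ref{cor::Dmetrix} to the strata $A_1,\ldots,A_L$, and the appendix supplies no separate proof beyond that of the discrete case. The one point to watch is that Assumption~\ref{con::cont} fixes $e(x)$ and the difference $m^{(1)}(x)-m^{(0)}(x)$ on each stratum but not the arms $m^{(t)}(x)$ individually, so $\sigma_\psi^2(A_l)$ must be read as the variance of the influence function built from the stratum-level means $m^{(t)}(A_l)$ rather than the pointwise $m^{(t)}(X)$; with that reading your argument goes through verbatim.
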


In the sections that follow, for simplicity, we illustrate our method in the scenario where the covariates are all discrete with finite possible values. The results can easily be extended to include continuous covariates following the same logic as descibed in this section.

\section{Estimating Conditional Variability}
\label{sec:3estimate}
The optimal covariate allocation derived above can benefit the planning of the composition of RCT samples. However, it's difficult, or impossible, to estimate the conditional variability of potential outcomes prior to RCTs being carried out. In this section, we provide a practical strategy using information from the observational data to estimate the theoretical optimal covariate distribution, and derive conditions under which our strategy yields consistent results.

In completely randomized experiments, we can show that
\begin{eqnarray*}
\sigma_\psi^2(x) = \frac{1}{e(x)} \var(Y^{(1)} \mid X=x) + \frac{1}{1-e(x)} \var(Y^{(0)} \mid X=x).
\end{eqnarray*}
To estimate $\sigma_\psi^2(x)$ by observational data, $\forall x$, let

\begin{align*}
\widehat Y^{(0)}(x) &= \frac{\sum_{S_i=0,T_i=0} Y_i}{\sum_{S_i=0,T_i=0} 1 }, \quad&
\widehat Y^{(1)}(x) &= \frac{\sum_{S_i=0,T_i=1} Y_i}{\sum_{S_i=0,T_i=1} 1 },\\
\widehat S^{(0)}(x) &= \frac{\sum_{S_i=0,T_i=0} \left(Y_i - \widehat Y^{(0)}(x) \right)^2}{\sum_{S_i=0,T_i=0} 1 - 1}, \quad&
\widehat S^{(1)}(x) &= \frac{\sum_{S_i=0,T_i=1} \left(Y_i - \widehat Y^{(1)}(x) \right)^2}{\sum_{S_i=0,T_i=1} 1 - 1}.
\end{align*}

We can then estimate the conditional variability of potential outcomes, the optimal covariate distribution and the deviation metric of RCT samples from observational data as follows
\begin{eqnarray*}
\hat{\sigma}_\psi^2(x) &=& \frac{1}{e(x)}\widehat S^{(1)}(x) + \frac{1}{1-e(x)}\widehat S^{(0)}(x)\\
\hat f_1^*(x_m) &= &\frac{f_0(x_m) \hat\sigma_\psi(x_m)}{\sum_{j=1}^M f_0(x_j) \hat\sigma_\psi(x_j) },\\
\widehat{\mathcal{D}}(f_1) &=& \var_{1} \left( \frac{\hat f_1^*(X)}{f_1(X)}\right).
\end{eqnarray*}

Assumption~\ref{cond::prop} below ensures the consistency of the estimated conditional variance of potential outcomes $\hat{\sigma}_\psi^2(x)$. The main problem of estimating $\hat{\sigma}_\psi^2(x)$ from observational data is the possibility of unobserved confounding. Instead of assuming unconfoundedness, 
our Assumption~\ref{cond::prop} is weaker and requires that the expectation of estimate $\hat{\sigma}_\psi^2(x)$ 
is proportional to the target conditional variability, which is a weaker condition.
\begin{assumption}\label{cond::prop}
For $\forall x$, suppose
\begin{eqnarray*}
&& \frac{1}{e(x)}\var{(Y^{(1)} \mid X=x, T=1, S=0)} + \frac{1}{1-e(x)}\var{(Y^{(0)} \mid X=x, T=0, S=0)} \\
&=& c \left[ \frac{1}{e(x)}\var{(Y^{(1)} \mid X=x, S=0)} + \frac{1}{1-e(x)}\var{(Y^{(0)} \mid X=x, S=0)} \right],
\end{eqnarray*}
where $c>0$ is an unknown constant.
\end{assumption}
The left-hand side of the equation above is the conditional variance of observed outcomes that can be estimated from the observational data, and the right-hand side is the theoretical conditional variance of potential outcomes that we want to approximate. Assumption~\ref{cond::prop} requires these two quantities to be proportional, rather than absolutely equal. Intuitively, the assumption supposes that the covariate segments in the observational data that exhibit high volatility in observed outcomes, also have high variance in their potential outcomes, although it is not required that the absolute levels of variance have to be the same.

\begin{theorem}\label{thm::asyocd}
Under Assumption~\ref{con::clt} and Assumption~\ref{cond::prop}, $\forall x$ we have
$$
    \hat f_1^*(x) \rightarrow f_1^*(x), \quad \widehat{\mathcal{D}}(f_1) \rightarrow \mathcal{D}(f_1).
$$
Thus, $\hat f_1^*(x)$ and $\widehat{\mathcal{D}}(f_1)$ are consistent. 
\end{theorem}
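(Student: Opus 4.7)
The plan is to establish consistency in three steps: first show that the sample-variance building blocks $\widehat{S}^{(1)}(x)$ and $\widehat{S}^{(0)}(x)$ converge in probability to their observational-data population counterparts, then combine these via continuous mapping together with Assumption~\ref{cond::prop} to obtain a limit for $\hat{\sigma}_\psi^2(x)$ that is proportional to $\sigma_\psi^2(x)$, and finally observe that the unknown proportionality constant cancels in the normalized ratio defining $\hat f_1^*$, so that another application of continuous mapping handles $\widehat{\mathcal{D}}(f_1)$.

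For the first step, Assumption~\ref{con::clt} forces $n_0 \to \infty$ alongside $n_1$, and positivity in the observational data ensures that for every $x \in \{x_1, \ldots, x_M\}$ and every $t \in \{0,1\}$ the stratum count $\sum_{S_i = 0,\, T_i = t,\, X_i = x} 1$ grows to infinity. The standard law of large numbers applied to the sample mean and sample variance within each stratum therefore gives
\begin{align*}
\widehat{S}^{(1)}(x) &\tendp \var\bigl(Y \mid X=x, T=1, S=0\bigr) = \var\bigl(Y^{(1)} \mid X=x, T=1, S=0\bigr),\\
\widehat{S}^{(0)}(x) &\tendp \var\bigl(Y \mid X=x, T=0, S=0\bigr) = \var\bigl(Y^{(0)} \mid X=x, T=0, S=0\bigr),
\end{align*}
where the equalities invoke the consistency part of Assumption~\ref{assump::identifiability1}(i) extended to the observational units.

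For the second step, since $e(x)$ and $1-e(x)$ are bounded away from zero by positivity, Slutsky's theorem and the continuous mapping theorem yield
\begin{align*}
\hat{\sigma}_\psi^2(x) = \frac{\widehat{S}^{(1)}(x)}{e(x)} + \frac{\widehat{S}^{(0)}(x)}{1-e(x)} \tendp c\,\sigma_\psi^2(x)
\end{align*}
by Assumption~\ref{cond::prop}, and hence $\hat{\sigma}_\psi(x) \tendp \sqrt{c}\,\sigma_\psi(x)$ for each $x$ via the square root. Plugging into the definition of $\hat f_1^*$ and again applying continuous mapping,
\begin{align*}
\hat f_1^*(x_m) = \frac{f_0(x_m)\,\hat{\sigma}_\psi(x_m)}{\sum_{j=1}^M f_0(x_j)\,\hat{\sigma}_\psi(x_j)} \tendp \frac{\sqrt{c}\,f_0(x_m)\,\sigma_\psi(x_m)}{\sqrt{c}\,\sum_{j=1}^M f_0(x_j)\,\sigma_\psi(x_j)} = f_1^*(x_m),
\end{align*}
so the nuisance constant $c$ cancels out of numerator and denominator. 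This cancellation is the crux of the argument and the reason only \emph{proportionality} (rather than equality) is required in Assumption~\ref{cond::prop}.

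For the third step, $\widehat{\mathcal{D}}(f_1) = \var_1\bigl(\hat f_1^*(X)/f_1(X)\bigr)$ is a continuous function of the finite-dimensional vector $(\hat f_1^*(x_1), \ldots, \hat f_1^*(x_M))$ with $f_1$ treated as fixed, so a final application of the continuous mapping theorem delivers $\widehat{\mathcal{D}}(f_1) \tendp \mathcal{D}(f_1)$. The main subtlety — rather than a deep technical obstacle — is the cancellation of $c$: this requires that $c$ be the \emph{same} constant across all strata $x$, which is precisely what Assumption~\ref{cond::prop} provides, and it requires the self-normalizing structure of $\hat f_1^*$ as a probability distribution. A secondary point that deserves to be flagged in the write-up is that transportability (Assumption~\ref{assump::transport}) together with ignorability in the RCT is what lets one identify the target $\sigma_\psi^2(x)$ with a functional of the $S=0$ distribution, which is implicit in how the right-hand side of Assumption~\ref{cond::prop} is being interpreted.
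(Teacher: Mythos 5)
Your proposal is correct and follows essentially the same route as the paper's proof: law of large numbers for the within-stratum sample variances, Assumption~\ref{cond::prop} to relate the limit to $c\,\sigma_\psi^2(x)$, and Slutsky/continuous mapping to pass to $\hat f_1^*$ and $\widehat{\mathcal{D}}(f_1)$. You are in fact more explicit than the paper on the key point — the cancellation of the unknown constant $c$ in the self-normalized ratio — whereas the paper devotes most of its proof to re-deriving the identity $\sigma_\psi^2(x) = \var(Y^{(1)}\mid X=x)/e(x) + \var(Y^{(0)}\mid X=x)/(1-e(x))$, which you correctly take as already established in the main text.
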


Based on Thereom~\ref{thm::asyocd}, we propose a novel strategy to select efficient RCT samples. Specifically, we select the candidate experimental design with covariate allocation $f_1$ that minimizes the estimate of deviation metric $\widehat{\mathcal{D}}(f_1)$. By contrast, a naive strategy prefers the candidate experimental design with $f_1 = f_0$, which mimics exactly the covariate distribution in the target cohort. If the conditional variability of potential outcomes $\sigma_\psi^2(x)$ vary widely according to $x$. Our strategy can lead to much more efficient treatment effect estimator compared to the naive strategy.

\section{Practical Scenarios}
\label{sec:4pras}
\subsection{Heterogeneous Unit Cost}
We also consider the experimental design with a cost constraint and heterogeneous cost for different sub-populations. The goal is to find the optimal sample allocation for RCT that minimizes the variance of the proposed estimator subject to a cost constraint. For $m = 1, \ldots, M$, let $C_m$ denote the cost to collect a sample in the sub-population with $X = x_m$. 

\begin{theorem}\label{thm::cost}
Under the cost constraint that
$$
    \sum_{m=1}^M C_m \left( \sum_{S_i=1, X_i = x_m} 1 \right) = C,
$$
the optimal sample allocation $f_1(X)$ that minimizes $\var(\hat \tau)$ is
$$
    f_1^{\text{c},*} (x_m) = \frac{f_0(x_m)\sigma_\psi(x_m)/\sqrt{C_m}} {\sum_{i=1}^M f_0(x_i)\sigma_\psi(x_i)/\sqrt{C_i}}
$$
for $m = 1, \ldots, M$. Here we use the superscript $c$ to denote the cost constraint.
\end{theorem}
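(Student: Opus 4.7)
My plan is to cast the optimization as a constrained minimization in the per-stratum sample counts $n_m := n_1 f_1(x_m)$, since both the variance formula from Theorem~\ref{thm::covreal} and the cost constraint are naturally expressed in these variables. By Theorem~\ref{thm::covreal},
\begin{equation*}
\var(\hat\tau) \;=\; \frac{1}{n_1}\sum_{m=1}^M \frac{f_0^2(x_m)\sigma_\psi^2(x_m)}{f_1(x_m)} \;=\; \sum_{m=1}^M \frac{f_0^2(x_m)\sigma_\psi^2(x_m)}{n_m},
\end{equation*}
so the objective is a separable convex function of $(n_1,\dots,n_M)$, and the cost constraint $\sum_{m} C_m n_m = C$ is linear. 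This reformulation is the key step: it eliminates the coupling between $n_1$ and the shape $f_1$ and makes the normalization $\sum_m f_1(x_m)=1$ automatic once we recover $f_1(x_m) = n_m/\sum_j n_j$.

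Next I would carry out the minimization itself, for which either Lagrange multipliers or Cauchy--Schwarz works in one line. Using Cauchy--Schwarz,
\begin{equation*}
\Bigl(\sum_{m=1}^M \tfrac{f_0^2(x_m)\sigma_\psi^2(x_m)}{n_m}\Bigr)\Bigl(\sum_{m=1}^M C_m n_m\Bigr) \;\geq\; \Bigl(\sum_{m=1}^M f_0(x_m)\sigma_\psi(x_m)\sqrt{C_m}\Bigr)^{\!2},
\end{equation*}
with equality iff $f_0^2(x_m)\sigma_\psi^2(x_m)/(C_m n_m^2)$ is independent of $m$, i.e.\ $n_m \propto f_0(x_m)\sigma_\psi(x_m)/\sqrt{C_m}$. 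Dividing through by $\sum_j n_j$ yields exactly the advertised expression for $f_1^{\text{c},*}(x_m)$. Equivalently, the Lagrangian $L = \sum_m f_0^2(x_m)\sigma_\psi^2(x_m)/n_m + \lambda(\sum_m C_m n_m - C)$ has first-order condition $n_m^2 = f_0^2(x_m)\sigma_\psi^2(x_m)/(\lambda C_m)$, giving the same minimizer; convexity of the objective in $n_m>0$ guarantees it is a global minimum.

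There is no real obstacle here beyond bookkeeping; the mildly subtle point is to recognize that the total trial size $n_1=\sum_m n_m$ should be treated as endogenously determined by the cost budget rather than fixed in advance, so that the variance expression in Theorem~\ref{thm::covreal} (which assumes a given $n_1$) must be rewritten in terms of $n_m$ before optimizing. As a sanity check, setting $C_m \equiv c$ for all $m$ reduces $\sum_m C_m n_m = C$ to a fixed-sample-size constraint, and the formula collapses to $f_1^*(x_m) = f_0(x_m)\sigma_\psi(x_m)/\sum_j f_0(x_j)\sigma_\psi(x_j)$, recovering Theorem~\ref{thm::covreal} as it should.
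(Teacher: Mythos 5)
Your proof is correct. Note that the paper's appendix does not actually contain a proof of Theorem~\ref{thm::cost} (it is marked ``to be continued''), so the only point of comparison is the paper's treatment of the analogous unconstrained result, Theorem~\ref{thm::covreal}, which is proved via Lemma~\ref{lem::minimum} --- an elementary calculus argument that substitutes $x_n = 1-\sum_{i<n}x_i$ and checks first- and second-order conditions on the simplex. Your route differs in two useful ways. First, you correctly identify the one genuinely non-routine point: under a cost constraint the total trial size $n_1$ is endogenous, so the objective $\frac{1}{n_1}\sum_m f_0^2(x_m)\sigma_\psi^2(x_m)/f_1(x_m)$ cannot be minimized over the shape $f_1$ alone; reparametrizing in the counts $n_m = n_1 f_1(x_m)$ decouples the problem and makes the constraint $\sum_m C_m n_m = C$ linear. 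A naive adaptation of Lemma~\ref{lem::minimum} that holds $n_1$ fixed and only reweights $f_1$ would not prove the stated theorem. Second, your Cauchy--Schwarz step
\begin{equation*}
\Bigl(\sum_{m=1}^M \tfrac{f_0^2(x_m)\sigma_\psi^2(x_m)}{n_m}\Bigr)\Bigl(\sum_{m=1}^M C_m n_m\Bigr) \;\geq\; \Bigl(\sum_{m=1}^M f_0(x_m)\sigma_\psi(x_m)\sqrt{C_m}\Bigr)^{\!2}
\end{equation*}
gives the global minimum and the equality condition $n_m \propto f_0(x_m)\sigma_\psi(x_m)/\sqrt{C_m}$ in one stroke, with no need for the second-derivative check that Lemma~\ref{lem::minimum} performs; it also yields the optimal value $(\sum_m f_0(x_m)\sigma_\psi(x_m)\sqrt{C_m})^2/C$ for free, which the Lagrangian route would require an extra substitution to obtain. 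Your sanity check that constant $C_m$ recovers Theorem~\ref{thm::covreal} is apt and confirms consistency with the paper's Lemma~\ref{lem::minimum} (take $a_i = f_0(x_i)\sigma_\psi(x_i)$ there).
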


Theorem~\ref{thm::cost} suggests that the optimal RCT covariate allocation under the given cost constraint is the covariate allocation of target cohort adjusted by both the heterogeneous costs for sub-populations and the conditional variability of potential outcomes. Intuitively, compared to the case without heterogeneous costs, we should include more RCT samples with lower unit cost.

\subsection{Different Precision Requirements}
Theorem~\ref{thm::covc} shows the optimal sample allocation to maximize the efficiency of average treatment effect estimator for the target cohort. If we require the same precision for estimators in each domain, we need the sample allocation as follows:
$$
    f_1^{\text{s},*}(x_m) = \frac{ \sigma^2_\psi(x_m)}{\sum_{j=1}^M  \sigma^2_\psi(x_j) },
$$
where we use the superscript s to denote the requirement of same precision for the CATE estimate in each segment.

Intuitively, to take both objectives into consideration, we propose a compromised allocation that falls between the two optimum allocations $\forall k \in [0,1]$:
$$
    f_1^{k,*}(x_m) = \frac{f_0^k(x_m) \sigma^{2-k}_\psi(x_m)}{\sum_{j=1}^M f_0^k(x_j) \sigma^{2-k}_\psi(x_j) }.
$$

\begin{corollary}\label{prop::1}
If for $m = 1, \ldots, M$, 
$$
    f_0(x_m) = \frac{\sigma_\psi(x_m)}{\sum_{j=1}^M \sigma_\psi(x_j)},
$$
we have for $\forall k \in [0,1]$,
$$
    f_1^*(X) = f_1^{\text{s},*}(X) = f_1^{k,*}(X).
$$

The deviation metric for sample allocation under same precision strategy and compromise strategy are
\begin{eqnarray*}
    \mathcal{D}(f_1^{\text{s},*}) &=& \var_1\left( \frac{f_0(X)}{\sigma_\psi(X)} \right) \left( \frac{\sum_{m=1}^M \sigma^2_\psi(x_m)}{\sum_{m=1}^M f_0(x_m) \sigma_\psi(x_m)} \right)^2,\\
    \mathcal{D}(f_1^{k,*}) &=& \var_1\left( \frac{f_0(X)}{\sigma_\psi(X)} \right)^{1-k} \left( \frac{\sum_{m=1}^M f_0^{k}(x_m) \sigma^{2-k}_\psi(x_m)}{\sum_{m=1}^M f_0(x_m) \sigma_\psi(x_m)} \right)^2,
\end{eqnarray*}
respectively.
\end{corollary}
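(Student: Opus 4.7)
The corollary comprises two claims: a coincidence statement (under $f_0 \propto \sigma_\psi$, the three allocations $f_1^*$, $f_1^{\text{s},*}$, $f_1^{k,*}$ all coincide) and two explicit formulas for the deviation metric. I would treat each as a direct algebraic verification; no deep machinery beyond Corollary~\ref{cor::Dmetrix} is needed.

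For the coincidence claim, the plan is to substitute the hypothesis $f_0(x_m) = \sigma_\psi(x_m)/S$, with $S = \sum_{j=1}^M \sigma_\psi(x_j)$, directly into each definition. In $f_1^*(x_m) = f_0(x_m)\sigma_\psi(x_m)/\sum_j f_0(x_j)\sigma_\psi(x_j)$ the factor $1/S$ cancels between numerator and denominator, leaving $\sigma^2_\psi(x_m)/\sum_j \sigma^2_\psi(x_j)$, which is precisely $f_1^{\text{s},*}(x_m)$. For $f_1^{k,*}(x_m)$, the substitution turns $f_0^k(x_m)\sigma^{2-k}_\psi(x_m)$ into $\sigma^2_\psi(x_m)/S^k$, and the constant $S^{-k}$ again cancels in the ratio, so all three allocations agree for every $k \in [0,1]$.

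For the deviation metrics, the core identity I would exploit is that both relevant ratios factor as a function of $X$ times a constant in $X$. Specifically,
\[
\frac{f_1^*(X)}{f_1^{\text{s},*}(X)} \;=\; \frac{f_0(X)}{\sigma_\psi(X)} \cdot \frac{\sum_j \sigma^2_\psi(x_j)}{\sum_j f_0(x_j)\sigma_\psi(x_j)},\qquad
\frac{f_1^*(X)}{f_1^{k,*}(X)} \;=\; \left(\frac{f_0(X)}{\sigma_\psi(X)}\right)^{1-k} \cdot \frac{\sum_j f_0^k(x_j)\sigma^{2-k}_\psi(x_j)}{\sum_j f_0(x_j)\sigma_\psi(x_j)}.
\]
Pulling the scalar factor outside via $\var(cZ) = c^2 \var(Z)$ and plugging into $\mathcal{D}(f_1) = \var_1(f_1^*(X)/f_1(X))$ then yields both stated formulas in one line each.

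The proof has no genuine obstacles—the content is essentially cancellation—but two small bookkeeping points are worth flagging. First, the symbol $\var_1$ in the statement refers to the law $f_1$ actually being evaluated, so in $\mathcal{D}(f_1^{\text{s},*})$ the variance is taken under $f_1^{\text{s},*}$ and in $\mathcal{D}(f_1^{k,*})$ under $f_1^{k,*}$; this must be tracked consistently. Second, in the expression $\var_1\!\left(f_0(X)/\sigma_\psi(X)\right)^{1-k}$ the exponent $1-k$ should be read as acting on the random variable inside the variance, not on the variance itself; this reading is pinned down by the boundary checks $k=0$ (recovering the displayed $\mathcal{D}(f_1^{\text{s},*})$) and $k=1$ (giving $\var_1(1)\cdot(\cdot)^2 = 0$, consistent with $\mathcal{D}(f_1^*)=0$ from Corollary~\ref{cor::Dmetrix}).
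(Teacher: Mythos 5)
Your proof is correct. The coincidence of the three allocations is indeed just cancellation of the constant $S^{-k}$ after substituting $f_0 \propto \sigma_\psi$, and both deviation formulas follow from factoring $f_1^*(X)/f_1(X)$ into an $X$-dependent ratio times a constant and pulling the constant out of the variance via $\var(cZ)=c^2\var(Z)$. The paper's appendix contains no proof of this corollary (it is marked ``to be continued''), so there is nothing to compare against; I will only note that your two bookkeeping points are the genuinely important ones: the reading of the exponent $1-k$ as acting on the random variable \emph{inside} $\var_1$ is the only one under which the stated formula is true (your $k=1$ check, which must return $\mathcal{D}(f_1^*)=0$, pins this down), and the fact that $\var_1$ denotes a different reference distribution in each of the two displayed formulas ($f_1^{\text{s},*}$ versus $f_1^{k,*}$) is a subtlety the statement itself leaves implicit.
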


\section{Numerical Study}
\label{sec:5numerical}
\subsection{Simulation}
In this section, we conduct a simulation study to illustrate how representativeness of experiment samples influences the estimation efficiency of average treatment effect for a target cohort, and demonstrate how the representativeness metric $\mathcal{D}(f_1)$ can facilitate the selection from candidate RCT sample designs. We set the size of observational dataset $n_0 = 10000$ and the size of experimental dataset $n_1 = 200$. For the units in the observational data, we draw covariates $x$ from $\{1, 2, 3\}$ with probability 0.3, 0.2 and 0.5 respectively. We then set
\begin{align*}
    Y^{(0)} = 2X + X^4\epsilon, \quad Y^{(1)} = 1 - X  + \epsilon, \text{ then,}\\
    Y^{(1)} - Y^{(0)} = 1 -3X + (X^4 -1)\epsilon,
\end{align*}
where $\epsilon \sim \mathcal{N}(0,1)$. We can then compute the conditional variability of potential outcomes $\sigma^2_\psi(x)$ and thus the optimal covariate distribution $f^*_1$ from the true population. Our model engenders distinctive conditional variability $\sigma^2_\psi(x)$ given different $x$, making the optimal covariate distribution $f_1^*$ very different from the target covariate distribution $f_0$.

For experimental data, we simulate 100 different candidate experimental sample designs. In each design, we randomly draw experiment samples from the target cohort with probability
$$
    \Pr(S=1 \mid X=x) = \frac{e^{p_x}}{e^{p_1} + e^{p_2} + e^{p_3}},
$$
where $p_1, p_2, p_3$ are i.i.d. samples drawn from standard normal distribution. We can then compute the real covariate distribution $f_1(x)$ and the repressiveness metric $\mathcal{D}(f_1)$. To estimate the efficiency of average treatment effect estimator, we conduct 1000 experiments for each design. In each experiment, the treatment for each unit follows a Bernoulli distribution with probability 0.5. The simulation result is shown in Fig~\ref{fig:sim}. The relationship between the variance and representativeness can be fit into a line, which is consistent with our result that $\var(\hat \tau) \propto \mathcal{D}(f_1)$. The red line shows the value $R(f_1)$ for the naive strategy mimicking exactly the target cohort distribution, which is not zero, and we can see that it is not the optimal RCT sample and does not produce the most efficient causal estimator. 


\begin{figure}
    \centering
    \includegraphics[width = 0.5\textwidth]{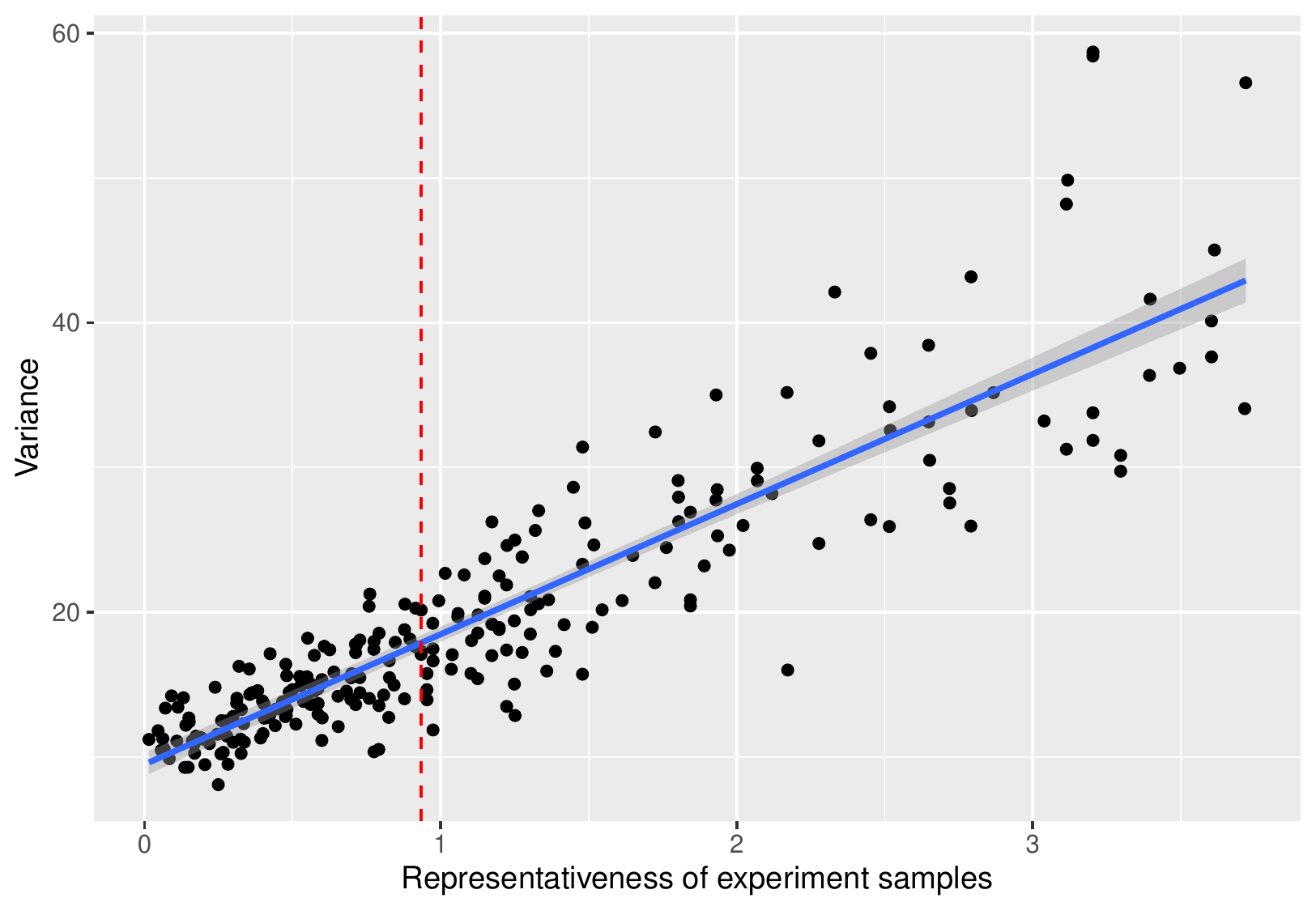}
    \caption{How the deviation metric of experiment samples $\mathcal{D}$ correlates with the estimated variance of the $\hat{\tau}$. The red line marks the deviation metric $\mathcal{D}$ of a trial sample selected following the na\"ive strategy.}
    \label{fig:sim}
\end{figure}

For the case with heterogeneous unit cost, we set the costs for sub-populations with covariate $X$ being 1, 2, 3 to be 20, 30, 40, respectively. The total capital available is 30000. Instead of randomly drawing experiment samples from the target cohort with a fixed number of total subjects, we randomly assign capitals for different sub-populations with a fixed amount of total cost. Given the budget assigned to each sub-population, we then draw subjects randomly from the sub-population, where the amount of subjects can be determined by the assigned budget. The simulation result is illustrated by Figure~\ref{fig:simc}. We can see that under the cost constraint, experiment samples that follow a distribution closer to $f_1^{c,*}$ lead to causal estimator with better efficiency.

\begin{figure}
    \centering
    \includegraphics[width = 0.5\textwidth]{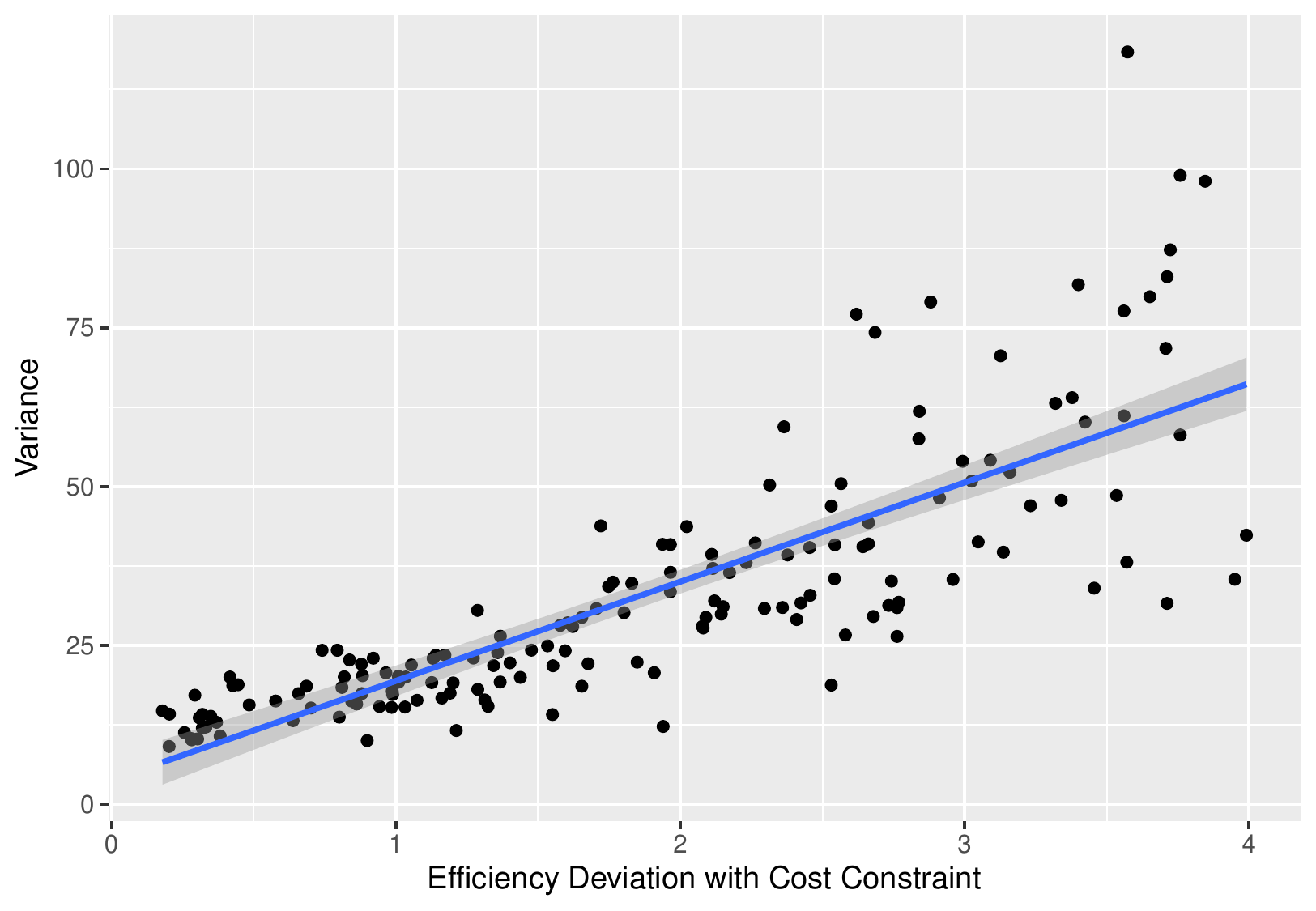}
    \caption{How the the deviation metric of experiment samples with cost constraint 
    influences the variance of $\hat{\tau}$.}
    \label{fig:simc}
\end{figure}

\subsection{Real Data Illustration}
We use the well-cited Tennessee Student/Teacher Achievement Ratio (STAR) experiment to assess how covariate distribution of experiment samples influences the estimation efficiency of average treatment effect for target cohort. STAR is a randomized experiment started in 1985 to measure the effect of class size on student outcomes, measured by standardized test scores. Similar to the exercise in \cite{kallus2018removing}, we focus a binary treatment: $T=1$ for small classes (13-17 pupils), and $T=0$ for regular classes (22-25 pupils). Since many students only started the study at first grade, we took as treatment their class-type at first grade. The outcome $Y$ is the sum of the listening, reading, and math standardized test at the end of first grade. We use the following covariates $X$ for each student: student race ($X_1 \in \{ 1, 2 \}$) and school urbanicity ($X_2 \in \{1,2,3,4\}$). We exclude units with missing covariates. The records of 4584 students remain, with 1733 assigned to treatment (small class, $T=1$), and 2413 to control (regular size class, $T=0$). Before analysis we fill the missing outcome by linear regression based on treatments and two covariates so that both two potential outcomes $Y_0$ and $Y_1$ for each student are known.

We simulate 500 candidate experiment sample allocations. For each allocations, we select $n_1 = 500$ experiment units from the dataset with probability
$$
    \Pr(S=1 \mid X=x) = \frac{e^{p_{x_1x_2}}}{e^{p_{11}} + e^{p_{12}} + e^{p_{13}}+ e^{p_{14}}+ e^{p_{21}}+ e^{p_{22}}+ e^{p_{23}}+ e^{p_{24}}}.
$$
where $p_{11}, p_{12}, p_{13}, p_{14}, p_{21}, p_{22}, p_{23}, p_{24}$ are i.i.d. samples drawn from standard normal distribution. We can then compute the real covariate distribution $f_1(x)$ and the repressiveness of the experiment samples $\mathcal{D}(f_1) = \var_1 (f_1^*(X) / f_1(X))$. To estimate the efficiency of average treatment effect estimator, we conduct 200 experiments for each design. In each experiment, the treatment follows a Bernoulli distribution with probability $1733/4584 = 0.378$. The simulation result is shown in Fig~\ref{fig:real}. The relationship between the variance and the deviation metric $\mathcal{D}(f)$ can be fit into a line, which is consistent with our result that $\var(\hat \tau) \propto \mathcal{D}(f_1)$.

\begin{figure}
    \centering
    \includegraphics[width = 0.5\textwidth]{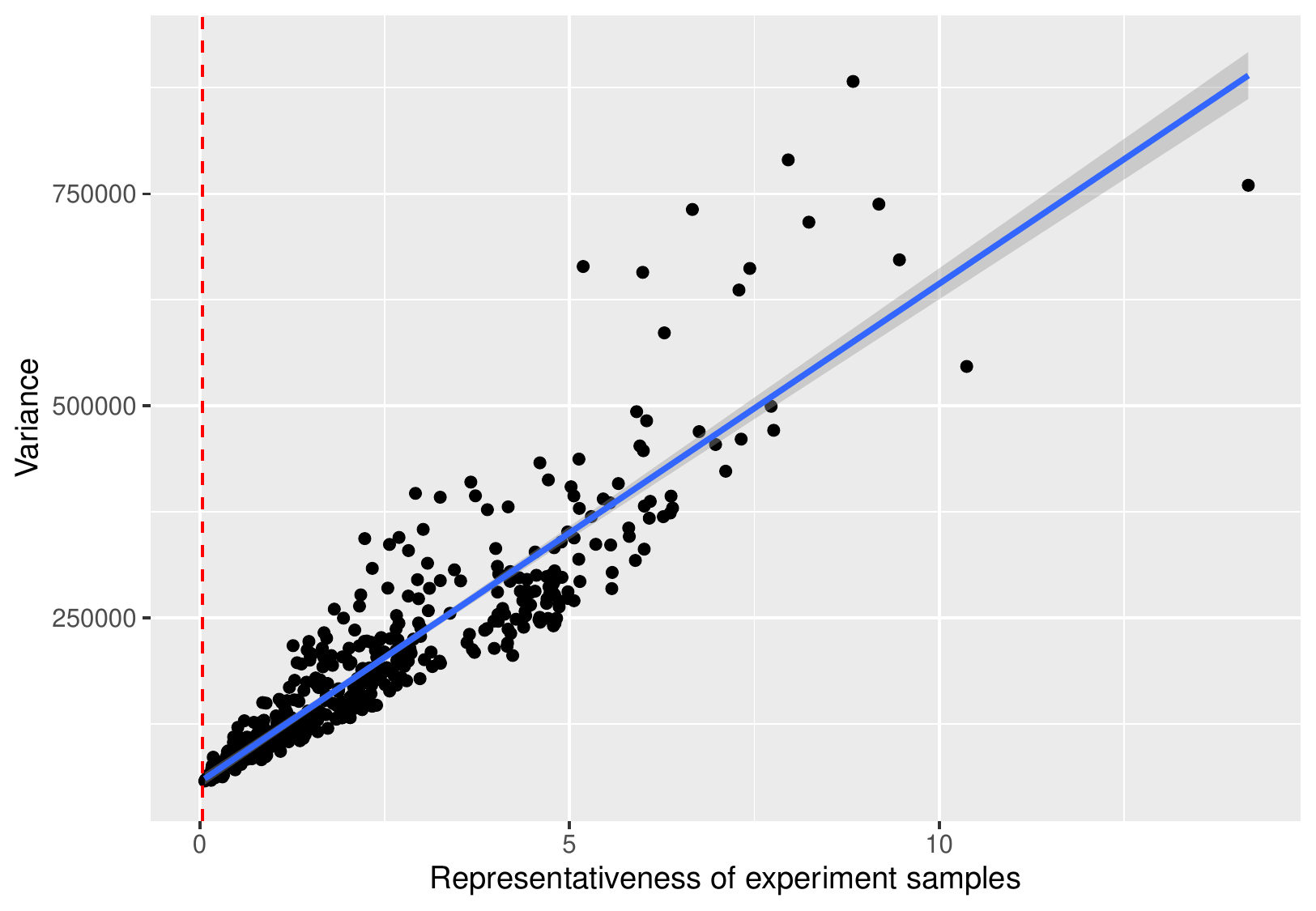}
    \caption{How the deviation metric of experiment samples $\mathcal{D}$ correlates with the estimated variance of the $\hat{\tau}$. The red line marks the deviation metric $\mathcal{D}$ of a trial selected following the na\"ive strategy.}
    \label{fig:real}
\end{figure}


\section{Conclusion}
In this paper, we examine the common procedure of generalizing causal inference from an RCT to a target cohort. We approach this as a problem where we can combine an RCT with an observational data. The observational data has two roles in the combination: one is to provide the exact covariate distribution of the target cohort, and the other role is to provide a means to estimate the conditional variability of the causal effect by covariate values.

We give the expression of the variance of Inverse Propensity Sampling Weights estimator as a function of covariate distribution in the RCT. We subsequently derive the variance-minimizing optimal covariate allocation in the RCT, under the constraint that the size of the trial population is fixed.  Our result indicates that the optimal covariate distribution of the RCT does not necessarily follow the exact distribution of the target cohort, but is instead adjusted by the conditional variability of potential outcomes. Practitioners who are at the design phase of a trial can use the optimal allocation result to plan the group of participants to recruit into the trial. \par
We also formulate a deviation metric quantifying how far a given RCT allocation is from optimal. The advantage of this metric is that it is proportional to the variance of the final ATE estimate so that when presented with several candidate RCT cohorts, practitioners can compare and choose the most efficient RCT according to this metric.\par
The above results depend on the estimation of conditional variability of the causal effect by covariate values, which remains unknown. We propose to estimate it using the observational data and outline mild assumptions that needs to be met. 
In reality, practitioners usually have complex considerations when designing a trial, for instance cost constraints and precision requirements. We develop variants of our main results to apply in such practical scenarios. Finally, we use two numerical studies to corroborate our theoretical results.  


\bibliographystyle{abbrvnat}
\bibliography{refs}

\newpage
\appendix

\section{Proof}
To be continued.

\begin{lemma}
\label{lem::minimum}
Let $f(x) = \sum_{i=1}^n a_i^2/x_i$, where $x = (x_1, \ldots, x_n)$ and $a = (a_1, \ldots, a_n)$ s.t. $ x, a \geq 0$ and $\sum_{i=1}^{n} x_i = 1$. Function $f(x)$ reaches its minimum when $x_i = a_i / (\sum_{i=1}^{n} a_i)$ for $i = 1, \ldots, n$.
\end{lemma}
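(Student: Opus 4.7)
The plan is to prove this by a direct application of the Cauchy--Schwarz inequality, which gives both the lower bound and identifies the minimizer in one shot. An alternative is Lagrange multipliers, but Cauchy--Schwarz is cleaner and avoids the need to separately verify it is a minimum rather than a saddle point.

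The key step is to write
\[
\left(\sum_{i=1}^n a_i\right)^2 = \left(\sum_{i=1}^n \frac{a_i}{\sqrt{x_i}} \cdot \sqrt{x_i}\right)^2 \leq \left(\sum_{i=1}^n \frac{a_i^2}{x_i}\right)\left(\sum_{i=1}^n x_i\right) = \sum_{i=1}^n \frac{a_i^2}{x_i} = f(x),
\]
using the constraint $\sum_i x_i = 1$. This immediately gives $f(x) \geq (\sum_i a_i)^2$ for every feasible $x$. The equality condition in Cauchy--Schwarz requires the two sequences $(a_i/\sqrt{x_i})$ and $(\sqrt{x_i})$ to be proportional, that is, $a_i/\sqrt{x_i} = \lambda \sqrt{x_i}$ for some $\lambda > 0$, equivalently $x_i = a_i/\lambda$. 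Imposing $\sum_i x_i = 1$ fixes $\lambda = \sum_j a_j$, giving the claimed minimizer $x_i = a_i/\sum_j a_j$.

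A small technical point to handle is the case where some $a_i = 0$ or some $x_i = 0$, since $a_i^2/x_i$ is then a $0/0$ form. The natural convention is $a_i^2/x_i = 0$ when $a_i = 0$ (regardless of $x_i$) and $a_i^2/x_i = +\infty$ when $a_i > 0$ and $x_i = 0$. Under this convention, any $x$ with $x_i = 0$ but $a_i > 0$ cannot be a minimizer, and the Cauchy--Schwarz argument can be restricted to the indices where $a_i > 0$, at which the proposed minimizer sets $x_i > 0$ precisely when $a_i > 0$. The only obstacle is the degenerate case $\sum_j a_j = 0$ (all $a_i = 0$), for which $f \equiv 0$ and the statement is trivial. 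I would state this convention once at the start of the proof and then carry out the Cauchy--Schwarz step without further comment.
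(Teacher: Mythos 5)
Your proof is correct, but it takes a genuinely different route from the paper. The paper eliminates $x_n = 1 - \sum_{i=1}^{n-1} x_i$, sets the first partial derivatives to zero to find the critical point, and checks that the diagonal second partials are nonnegative. Your Cauchy--Schwarz argument instead produces the global lower bound $f(x) \geq \left(\sum_{i=1}^n a_i\right)^2$ in one line, with the equality condition identifying the minimizer. This buys you two things the paper's argument does not deliver as cleanly: first, a genuinely global conclusion (the paper's check that $\partial^2 f/\partial x_i^2 \geq 0$ only inspects the diagonal of the Hessian of the reduced function, which by itself does not certify a global minimum --- one would need to invoke convexity of $f$ as a sum of the convex functions $a_i^2/x_i$ to close that gap); and second, the explicit minimal value $\left(\sum_i a_i\right)^2$, which is exactly the quantity $\left(\sum_m f_0(x_m)\sigma_\psi(x_m)\right)^2$ that reappears in Corollary~\ref{cor::Dmetrix}, so your route makes that connection visible for free. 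Your handling of the degenerate cases $a_i = 0$ and $x_i = 0$ is also more careful than the paper, which implicitly assumes $x > 0$ when dividing by $x_i$. Both approaches are valid; yours is the tighter argument.
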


\begin{proof}[Proof of Lemma~\ref{lem::minimum}]
Note that $x_n = 1 - \sum_{i=1}^{n-1} x_i$, for $i = 1, \ldots, n-1$, we have
\begin{eqnarray*}
    \frac{\partial f(x)}{\partial x_i} &=& -\frac{a_i^2}{x_i^2} + \frac{a_n^2}{(1 - \sum_{j =1}^{n-1} x_j)^2},\\
    \frac{\partial^2 f(x)}{\partial x_i^2} &=& \frac{2 a_i^2}{x_i^3} + \frac{2 a_n^2}{(1 - \sum_{j =1}^{n-1} x_j)^3} \geq 0.
\end{eqnarray*}
Given that $x$ is positive, by setting ${\partial f(x)}/{\partial x_i} = 0$, we can get $x_i = a_i / (\sum_{i=1}^{n} a_i)$ for $i = 1, \ldots, n$. This, together with ${\partial^2 f(x)}/{\partial x_i^2} \geq 0$, ensures that $f(x)$ reaches its minimum when $x_i = a_i / (\sum_{i=1}^{n} a_i)$ for $i = 1, \ldots, n$.
\end{proof}

\begin{proof}[Proof of Theorem~\ref{thm::covreal}]
Given Theorem~\ref{con::clt} and the definition of $\hat \tau$, $\hat\tau_\k$, a simple calculation then gives $\var(\hat \tau)$ and $\vara(\hat \tau_\k)$. Note that $\Vert K \Vert_2$ is constant given kernel $K$, according to Lemma~\ref{lem::minimum}, we have 
\begin{eqnarray*}
    f_1^*(x_m) = \frac{f_0(x_m) \sigma_\psi(x_m)}{\sum_{j=1}^M f_0(x_j) \sigma_\psi(x_j) }.
\end{eqnarray*}

We then have
\begin{eqnarray*}
    \sum_{m=1}^M f_0^2(x_m)\frac{\sigma_\psi^2(x_m)}{f_1(x_m)}
    &=& \sum_{m=1}^M f_1(x_m) f_0^2(x_m)\frac{\sigma_\psi^2(x_m)}{f_1^2(x_m)}\\
    &=& \E_{1} \left \{ \left( f_0(X)\frac{\sigma_\psi(X)}{f_1(X)} \right)^2 \right \}\\
    &=& \var_{1} \left( \frac{f_0(X) \sigma_\psi(X)}{f_1(X)}\right)
    + \E_{1}^2 \left \{  f_0(X)\frac{\sigma_\psi(X)}{f_1(X)} \right \}\\
    &=& \var_{1} \left( \frac{f_0(X) \sigma_\psi(X)}{f_1(X)}\right)
    + \left( \sum_{m=1}^M f_1(x_m) f_0(x_m) \frac{\sigma_\psi(x_m)}{f_1(x_m)}   \right)^2\\
    &=& \left( \sum_{m=1}^M f_0(x_m) \sigma_\psi(x_m)  \right)^2 \times \left\{\mathcal{D}(f_1) + 1\right\},
\end{eqnarray*}
where $\E_1(\cdot) = \E_{X \mid S=1}(\cdot)$.

\end{proof}

\begin{proof}[Proof of Theorem~\ref{thm::asyocd}]
We first show that in experimental data, we have 
\begin{eqnarray*}
\sigma_\psi^2(x) = \frac{1}{e(x)} \var(Y^{(1)} \mid X=x) + \frac{1}{1-e(x)} \var(Y^{(0)} \mid X=x).
\end{eqnarray*}

Note that $\tau(x) = m^{(1)}(x) - m^{(0)}(x)$ and $T(1-T)=0$, together with the definition of $\sigma_\psi^2(x)$, they ensure that
\begin{eqnarray*}
\sigma_\psi^2(x)
&=& \E \left[ \frac{T^2 (Y-m^{(1)}(x))^2 }{e(x)^2} +  \frac{(1-T)^2 (Y-m^{(0)}(x))^2 }{(1-e(x))^2} - 2\frac{T(1-T) (Y-m^{(1)}(x))(Y-m^{(0)}(x)) }{e(x)(1-e(x))} \mid X=x, S=1 \right] \\
&=& \E \left[ \frac{T^2 (Y-m^{(1)}(x))^2 }{e(x)^2} +  \frac{(1-T)^2 (Y-m^{(0)}(x))^2 }{(1-e(x))^2} \mid X=x, S=1 \right]\\
&=& \Pr(T=1 \mid X=x, S=1)  \E \left[ \frac{T^2 (Y-m^{(1)}(x))^2 }{e(x)^2}  \mid X=x, S=1 \right]\\
&& + \Pr(T=0 \mid X=x, S=1)  \E \left[ \frac{(1-T)^2 (Y-m^{(0)}(x))^2 }{(1-e(x))^2}  \mid X=x, S=1 \right]\\
&=& e(x) \frac{\E \left[ T^2 (Y-m^{(1)}(x))^2   \mid X=x, S=1 \right]}{e(x)^2} + (1 - e(x)) \frac{\E \left[ (1-T)^2 (Y-m^{(0)}(x))^2   \mid X=x, S=1 \right]}{(1-e(x))^2}\\
&=& \frac{1}{e(x)} \var(Y^{(1)} \mid X=x, S=1) + \frac{1}{1-e(x)} \var(Y^{(0)} \mid X=x, S=1)\\
&=& \frac{1}{e(x)} \var(Y^{(1)} \mid X=x) + \frac{1}{1-e(x)} \var(Y^{(0)} \mid X=x).
\end{eqnarray*}
The last equation holds due to Assumption~\ref{assump::identifiability1}.

Central limit theorem and Assumption~\ref{cond::prop} then give
\begin{eqnarray*}
\hat \sigma^2_\psi(x) &\rightarrow& \frac{1}{e(x)}\var{(Y^{(1)} \mid X=x, T=1, S=0)} + \frac{1}{1-e(x)}\var{(Y^{(0)} \mid X=x, T=0, S=0)},\\
\hat \sigma^2_\psi(x) &\rightarrow& c \sigma^2_\psi(x).
\end{eqnarray*}
This, together with Slutsky's theorem, ensures
\begin{eqnarray*}
\hat f_1^*(x) \rightarrow f_1^*(x).
\end{eqnarray*}
\end{proof}
\end{document}